\definecolor{citegreen}{rgb}{0.2,0.2,0.6}
\definecolor{darkred}{rgb}{0.6,0.0,0.0}
\definecolor{darkred2}{rgb}{0.6,0.0,0.0}
\newcommand{\sfI}{\mathsf{I}}
\newcommand{\frr}{\mathfrak{r}}
\newcommand{\sfA}{\mathsf{A}}
\newcommand{\sfB}{\mathsf{B}}
\newcommand{\sfP}{\mathsf{P}}
\newcommand\kp{\kappa}
\newcommand\rank{{\rm rank}\,}
\newcommand\arr{\rightarrow}
\newcommand\lbra{\left}
\newcommand\rbra{\right}
\newcommand\bfo{\mathbf{0}}
\def\aa{\alpha}
\def\iff{{\it if, and only if}\,}
\def\pp{\prime}
\def\dpp{{\prime\prime}}
\newcommand\sfJ{\mathsf{J}}
\newcommand\sfY{\mathsf{Y}}
\newcommand\sfT{\mathsf{J}}
\newcommand\Op{\mathsf{H}}
\newcommand\frm{\mathfrak{h}}
\newcommand\p{\partial}
\newcommand\sd{\sigma_{\rm d}}
\newcommand\s{\sigma}
\renewcommand\arr{\rightarrow}
\newcommand\sfe{\mathsf{e}}
\def\s{\sigma}
\def\sd{\sigma_{\rm d}}
\def\lm{\lambda}
\def\softness{0.4}
\definecolor{softred}{rgb}{1,\softness,\softness}
\definecolor{softgreen}{rgb}{\softness,1,\softness}
\definecolor{softblue}{rgb}{\softness,\softness,1}
\definecolor{softrg}{rgb}{1,1,\softness}
\definecolor{softrb}{rgb}{1,\softness,1}
\definecolor{softgb}{rgb}{\softness,1,1}
\newcounter{counter_a}
\newenvironment{myenum}{\begin{list}{{\rm(\roman{counter_a})}}%
{\usecounter{counter_a}
\setlength{\itemsep}{1.ex}\setlength{\topsep}{1.0ex}
\setlength{\leftmargin}{5ex}\setlength{\labelwidth}{5ex}}}{\end{list}}
\newcommand{\eg}{{\it e.g.}\,}
\newcommand{\cf}{{\it cf.}\,}
\numberwithin{figure}{section}
\numberwithin{equation}{section}
\theoremstyle{plain}% default
\newtheorem*{thm*}{Theorem}
\newtheorem{thm}{Theorem}[section]
\newtheorem{lem}[thm]{Lemma}
\newtheorem{prop}[thm]{Proposition}
\newtheorem{example}[thm]{Example}
\newtheorem{cor}[thm]{Corollary}
\newtheorem{assumption}[thm]{Assumption}
\newtheorem{dfn}[thm]{Definition}
\theoremstyle{remark}
\newtheorem{remark}[thm]{Remark}
\theoremstyle{plain}
\newcommand{\dd}{\mathrm{d}}
\newcommand{\beu}{\begin{equation*}}
\newcommand{\eeu}{\end{equation*}}
\newcommand{\besu}{\begin{equation*}
\begin{aligned}}
\newcommand{\eesu}{\end{aligned}
\end{equation*}}
\newcommand{\bes}{\begin{equation}
\begin{aligned}}
\newcommand{\ees}{\end{aligned}
\end{equation}}
\newcommand\cD{\mathcal D}
\newcommand\cF{\mathcal F}
\newcommand\cG{\mathcal G}
\newcommand\cH{\mathcal H}
\newcommand\cN{\mathcal N}
\newcommand\cR{\mathcal R}
\newcommand\cS{\mathcal S}
\newcommand\frq{\mathfrak q}
\newcommand\frs{\mathfrak s}
\newcommand\frp{\mathfrak p}
\newcommand\eps{\varepsilon}
\newcommand\ov{\overline}
\newcommand\wt{\widetilde}
\newcommand\wh{\widehat}
\newcommand\sess{\sigma_{\rm ess}}
\newcommand\sign{{\rm sign\,}}
\newcommand\void[1]{}
\def\ov{\overline}
\def\eps{\varepsilon}
\def\sess{\sigma_{\rm ess}}
   \def\sE{{\mathfrak E}}
\def\frs{{\mathfrak s}}
      \def\dC{{\mathbb C}}
\def\dD{{\mathbb D}}
   \def\dN{{\mathbb N}}   
      \def\dR{{\mathbb R}}
\def\cD{{\mathcal D}}   \def\cE{{\mathcal E}}   \def\cF{{\mathcal F}}
\def\cG{{\mathcal G}}   \def\cH{{\mathcal H}}   
   \def\cN{{\mathcal N}}   \def\cO{{\mathcal O}}
      \def\cR{{\mathcal R}}
\def\cS{{\mathcal S}}      \def\cU{{\mathcal U}}
\newcommand{\dom}{\operatorname{dom}}
\newcommand{\N}{\mathbb{N}}
\newcommand{\R}{{\mathbb{R}}}
\newcommand{\C}{{\mathbb{C}}}
\def\sfT{\mathsf{T}}
\def\sfK{\mathsf{K}}
\def\sfS{\mathsf{S}}
\def\sfR{\mathsf{R}}
\def\bm1{\mathbbm{1}}
\newcommand{\ii}{{\rm i}}
\renewcommand{\ker}{{\rm{ker}\,}}
\renewcommand{\Im}{\text{\rm Im}\,}
\newcommand{\diag}{{\rm diag}}
\newcommand{\comm}[1]{}
\numberwithin{equation}{section}
\title{On resonances and bound states of Smilansky Hamiltonian}
\author{P.~Exner \and V.~Lotoreichik \and M.~Tater}
\address{Nuclear Physics Institute, Czech Academy of Sciences, 25068 \v Re\v z, Czech Republic}
\email{exner@ujf.cas.cz, lotoreichik@ujf.cas.cz, \and
tater@ujf.cas.cz}
\dedicatory{In memory of B.\,S.~Pavlov (1936--2016)}              
\subjclass[2010]{81Q10, 81Q35}
\keywords{Smilansky Hamiltonian, resonances,
resonance free region, weak coupling asymptotics, Riemann surface, bound states}
\begin{document}

\begin{abstract}
	We consider the self-adjoint Smilansky Hamiltonian $\Op_\eps$ 
	in $L^2(\dR^2)$ 
	associated with the formal differential	expression 
	$-\p_x^2 - \frac12\big(\p_y^2 + y^2) - \sqrt{2}\eps y \delta(x)$
	in the sub-critical
	regime, $\eps \in (0,1)$. We demonstrate the existence 
	of resonances for $\Op_\eps$ 
	on a countable subfamily of sheets of the underlying Riemann surface 
	whose distance from the physical sheet is finite.
	On such sheets, we find resonance free 
	regions and characterise resonances for small $\eps > 0$.
	In addition, we refine the previously known results on the bound states of 
	$\Op_\eps$ in the weak coupling regime ($\eps\arr 0+$).
	In the proofs we use Birman-Schwinger principle for $\Op_\eps$, 
	elements of spectral theory for Jacobi matrices, and the analytic 
	implicit function 
	theorem.
\end{abstract}

\maketitle

\section{Introduction}
In this paper we investigate resonances and bound states of the self-adjoint Hamiltonian $\Op_\eps$ acting 
in the Hilbert space $L^2(\R^2)$ and corresponding to the formal differential expression
\begin{equation}\label{eq:formal}
	-\p_x^2 - \frac12\big(\p_y^2 + y^2) - 
	\sqrt{2}\eps y \delta(x)\qquad \text{on}~\dR^2,
\end{equation}
in the sub-critical regime, $\eps \in (0,1)$. The operator $\Op_\eps$
will be rigorously introduced in Section~\ref{sec:Smilansky} below.
Operators of this type were suggested by U.~Smilansky in~\cite{Sm04}
as a model of \emph{irreversible quantum system}. His aim was to demonstrate that the `heat bath' 
need not have an infinite number of degrees of freedom. 
On a physical level of rigour he showed that the spectrum undergoes 
an abrupt transition at the critical value $\eps = 1$. A mathematically precise 
spectral analysis of these operators and their generalisations has been performed by M.~Solomyak 
and his collaborators 
in~\cite{ES05I, ES05II, NS06, RS07, S04FAA, S06, S06JPA}. Time-dependent
Schr\"odinger equation generated by Smilansky-type Hamiltonian
is considered in~\cite{G11}.

By now many of the spectral properties of $\Op_\eps$ are understood.
%In contrast to that there were no attempts to analyse resonances in this model. 
On the other hand, little attention has been paid so far to the fact that such a system 
can also exhibit resonances. The main aim of this paper is to initiate investigation of 
these resonances starting from demonstration of their existence. 
One of the key difficulties is that this model belongs to a class wherein the
resolvent extends to a \emph{Riemann 
surface} having uncountably many sheets. The same complication
appears \eg in studying resonances 
for quantum waveguides~\cite{APV00, DEM01, E02, EGST96},~\cite[\S 3.4.2]{EK15} and 
for general manifolds with cylindrical ends~\cite{C02, C04}.

In this paper we prove the existence and obtain a characterisation of resonances
of $\Op_\eps$ on a countable subfamily 
of sheets whose distance from the physical sheet is finite in the sense explained below. On any such sheet we characterise a region which is free of resonances.
As $\eps \arr 0+$, the resonances on such sheets are localized in the vicinities of the thresholds
$\nu_n = n+1/2$, $n\in\dN$. We obtain a description
of the subset of the thresholds in the vicinities of which a resonance
exists for all sufficiently small $\eps > 0$ and derive asymptotic expansions of these resonances in the limit $\eps \arr 0+$. No attempt has been made here
to define and study resonances on the sheets whose distance from
the physical sheet is infinite.

As a byproduct, we obtain refined properties of the bound states
of $\Op_\eps$ using similar methods as for resonances.
More precisely, we obtain
a lower bound on the first eigenvalue of $\Op_\eps$
and an asymptotic expansion of the weakly coupled bound state
of $\Op_\eps$ in the limit $\eps \arr 0+$.

Methods developed in this paper can also be useful to tackle
resonances for the analogue of Smilansky model with regular potential 
which is suggested in~\cite{EB12} and further investigated in~\cite{BE14, BEKT15}.

%--------------------------------------------------------------
\subsection*{Notations}
%--------------------------------------------------------------
%Before going further we agree on some notations.
We use notations $\dN := \{1,2,\dots\}$ and
$\dN_0 := \dN\cup\{0\}$ for the sets of positive 
and natural integers, respectively.
We denote the complex plane by $\dC$ and define its commonly used
sub-domains: $\dC^\times := \dC\setminus \{0\}$, $\dC_\pm := \{\lm\in\dC\colon \pm \Im\lm > 0\}$ and
$\dD_r(\lm_0) := \{\lm\in\dC\colon |\lm - \lm_0| < r \}$,
$\dD_r^\times(\lm_0) := \{\lm\in\dC\colon 0 < |\lm - \lm_0| < r \}$, $\dD_r := \dD_r(0)$, $\dD_r^\times := \dD_r^\times(0)$
with $r> 0$. 
The principal value
of the argument for $\lm \in\dC^\times$
is denoted by $\arg\lm\in (-\pi,\pi]$. 
The branches of the square root are defined by
\[
	\dC^\times \ni \lm \mapsto (\lm)^{1/2}_j := 
	|\lm|^{1/2} e^{\ii ((1/2)\arg\lm + j\pi)},\qquad j=0,1.
\] 
If the branch of the square root is not explicitly specified we understand the branch $(\cdot)^{1/2}_0$ by default. We also set $\bfo = (0,0) \in \dC^2$. 

The $L^2$-space over $\dR^d$, $d=1,2$, with the usual inner product
is denoted by $(L^2(\R^d),(\cdot,\cdot)_{\R^d})$
and the $L^2$-based first order Sobolev space by $H^1(\dR^d)$,
respectively.
The space of square-summable sequences of vectors in a Hilbert space 
$\cG$ is denoted by $\ell^2(\N_0;\cG)$. In the case that 
$\cG = \dC$ we simply write $\ell^2(\N_0)$ and denote by $(\cdot,\cdot)$ the usual inner product on it.

%; \ie
%$\{x_n\}_n\in \ell^2(\N_0;\cG)$ \iff $x_n\in\cG$, $\forall n\in\dN_0$
%and $\sum_{n\in\dN_0} \|x_n\|^2_\cG <\infty$;
%

For $\xi = \{\xi_n\} \in \ell^2(\dN_0)$ we adopt the convention that $\xi_{-1} = 0$.
\emph{Kronecker symbol} is denoted by $\delta_{nm}$, $n,m\in\N_0$,
we set $\sfe_n := \{\delta_{nm}\}_{m\in\N_0}\in \ell^2(\N_0)$, $n\in\N_0$,
and adopt the convention that $\sfe_{-1} := \{0\}$. 
We understand by $\diag (\{q_n\})$ the
\emph{diagonal matrix} in $\ell^2(\dN_0)$ 
with entries $\{q_n\}_{n\in\N_0}$ 
and by $\sfJ(\{a_n\},\{b_n\})$ the \emph{Jacobi matrix} in $\ell^2(\N_0)$ with diagonal entries 
$\{a_n\}_{n\in\N_0}$ and off-diagonal entries $\{b_n\}_{n\in\N}$\footnote{We do not distinguish between Jacobi matrices and operators
in the Hilbert space $\ell^2(\N_0)$ induced by them, since in our considerations all the Jacobi matrices are bounded,  closed, 
and everywhere defined in $\ell^2(\N_0)$.}. 
We also set $\sfJ_0 := \sfJ(\{0\},\{1/2\})$. 

By $\s(\sfK)$ we denote the spectrum of a closed (not necessarily self-adjoint) operator $\sfK$ 
in a Hilbert space. An isolated eigenvalue $\lm\in\dC$ of $\sfK$ 
having finite algebraic multiplicity 
%with the range of $\sfK - \lm$ being closed in $\cH$ 
is a point of the \emph{discrete spectrum} for $\sfK$; see~\cite[\S XII.2]{RS-IV} for details.
The set of all the points of the discrete spectrum for $\sfK$
is denoted by $\sd(\sfK)$ and the \emph{essential spectrum} of $\sfK$ is defined by $\sess(\sfK) := \s(\sfK)\setminus\sd(\sfK)$.
 
For a self-adjoint operator $\sfT$ in a Hilbert space
we set $\lm_{\rm ess}(\sfT) := \inf\sess(\sfT)$ and, for $k\in\dN$, $\lm_k(\sfT)$ 
denotes the $k$-th eigenvalue of $\sfT$
in the interval $(-\infty, \lm_{\rm ess}(\sfT))$. These eigenvalues are 
ordered non-decreasingly with multiplicities taken into account.
The number of eigenvalues with multiplicities
of the operator $\sfT$ lying in a closed, open, or half-open interval
$\Delta\subset\R$ satisfying $\sess(\sfT) \cap \Delta = \varnothing$
is denoted by $\cN(\Delta;\sfT)$. 
For $\lm \le \lm_{\rm ess}(\sfT)$ the \emph{counting function} of $\sfT$ is defined by
$\cN_\lm(\sfT) := \cN((-\infty,\lm);\sfT)$.

%Miscelleaneous:
%and $\nu_{-1} = -\infty$.
%$\Omega_n := \dC_+\cup\dC_-\cup (\nu_{n-1},\nu_n)$.

%****************************************************************
\subsection{Smilansky Hamiltonian}\label{sec:Smilansky}
%****************************************************************
%Hermite
Define the \emph{Hermite functions} 
\begin{equation}\label{eq:Hermite}
	\chi_n(y) := e^{-y^2/2}H_n(y),\qquad n\in\N_0.
\end{equation}
Here, $H_n(y)$ is the \emph{Hermite polynomial} 
of degree $n\in\N_0$ normalized by the condition $\|\chi_n\|_\R = 1$\footnote{This normalization means that $H_n(y)$ is, in fact, a product
of what is usually called the \emph{Hermite polynomial} of degree $n\in\N_0$
with a normalization constant which depends on $n$.}.
For more details on Hermite polynomials 
see~\cite[Chap. 22]{AS} and also~\cite[Chap. 5]{Ch78}.
As it is well-known, the family $\{\chi_n\}_{n\in\N_0}$ 
constitutes an orthonormal basis
of $L^2(\R)$. Note also that the functions $\chi_n$ satisfy the three-term recurrence relation
\begin{equation}\label{eq:reccurence}
	\sqrt{n+1}\chi_{n+1}(y) - \sqrt{2}y\chi_{n}(y) + \sqrt{n}\chi_{n-1}(y) = 0,
	\qquad n\in\dN_0,
\end{equation}
where we adopt the convention $\chi_{-1} \equiv 0$.
The relation~\eqref{eq:reccurence} can be easily deduced from the recurrence relation~\cite[eq. 22.7.13]{AS} for Hermite polynomials.
By a standard argument any function $U\in L^2(\R^2)$ admits unique expansion
\begin{equation}\label{eq:Uexpansion}
	U(x,y) = \sum_{n\in\N_0} u_n(x)\chi_n(y),
	\qquad 
	u_n(x) := \int_\R U(x,y)\chi_n(y)\dd y,
\end{equation}
where $\{u_n\} \in \ell^2(\dN_0; L^2(\R))$. Following the presentation
in~\cite{S06}, we identify the function $U\in L^2(\R^2)$ and the sequence $\{u_n\}$
and write $U\sim \{u_n\}$. This identification defines a natural unitary transform
between the Hilbert spaces $L^2(\R^2)$ and $\cH := \ell^2(\dN_0;L^2(\R))$.
For the sake of brevity,
we denote the inner product on $\cH$ by $\langle\cdot, \cdot\rangle$.
Note that the Hilbert space $\cH$ can also be viewed as the tensor product
$\ell^2(\dN_0)\otimes L^2(\dR)$.

%Hamiltonian
For any $\eps \in\dR$ we define the subspace $\cD_\eps$ of 
$\cH$ as follows: an element $U \sim \{u_n\}\in\cH$ belongs to $\cD_\eps$
\iff
\begin{myenum}
	\item $u_n\in H^1(\R)$ for all $n\in\dN_0$; 
	\item $\{- (u_{n,+}^\dpp\oplus u_{n,-}^\dpp) + \nu_n u_n\} \in\cH$
	with $u_{n,\pm} := u_n|_{\R_\pm}$ and $\nu_n = n+1/2$ for $n\in\dN_0$;
	\item the boundary conditions
	\[
		u_n^\pp(0+) - u_n^\pp(0-) =
		\eps\big(\sqrt{n+1} u_{n+1}(0) + \sqrt{n}u_{n-1}(0)\big)
	\]
	are satisfied for all $n\in\dN_0$. For $n= 0$
	only the first term is present on the right-hand side.
\end{myenum}
By~\cite[Thm. 2.1]{S06}, the operator 
\begin{equation}\label{eq:operator}
	\dom\Op_\eps := \cD_\eps,\qquad \Op_\eps \{u_n\} := \{- (u_{n,+}^\dpp\oplus u_{n,-}^\dpp) + \nu_n u_n\},
\end{equation}
is self-adjoint in $\cH$. It corresponds to the formal 
differential expression~\eqref{eq:formal}. 
Further, we provide another way of defining $\Op_\eps$ which makes the correspondence
between the operator $\Op_\eps$ 
and the formal differential expression~\eqref{eq:formal} more transparent. 
To this aim we define the straight line $\Sigma := \{(0,y)\in\dR^2\colon y\in\dR\}$.
Then the Hamiltonain $\Op_\eps$, $\eps \in (-1,1)$, can be alternatively
introduced as the unique self-adjoint operator in $L^2(\R^2)$  
associated via the first representation
theorem~\cite[Thm. VI.2.1]{Kato} with a closed, densely defined, symmetric, and 	
semi-bounded quadratic form
\begin{equation}\label{eq:frm}
\begin{split}
	\frm_\eps[u] 
	& := 
	\|\p_x u\|^2_{\R^2} 
	+ 
	\frac12\|\p_y u\|^2_{\R^2} + \frac12 (y u, y u)_{\R^2} 
	+ 
	\eps\sqrt{2}\lbra( \sign(y) |y|^{1/2} u|_\Sigma, |y|^{1/2} u|_\Sigma \rbra)_\R,
	\\
	\dom\frm_\eps 
	& :=
	\lbra\{u\in H^1(\R^2)\colon yu \in L^2(\R^2), |y|^{1/2}(u|_\Sigma) 
	\in L^2(\R) \rbra\}.
\end{split}	
\end{equation}
For more details
and for the proof of equivalence between the two definitions of $\Op_\eps$
see~\cite[\S 9]{S06}. 
Since $\Op_\eps$ commutes with the parity operator in $y$-variable, it 
is unitarily equivalent to $\Op_{-\eps}$. We remark
that the case $\eps = 0$ admits separation of variables.
Thus, it suffices to study $\Op_\eps$ with $\eps > 0$.

In the following proposition we collect fundamental
spectral properties of $\Op_\eps$, $\eps \in (0,1)$, which are of importance
in the present paper.
\begin{prop}\label{prop:known}
	Let the self-adjoint operator $\Op_\eps$, $\eps\in(0,1)$,
	be as in~\eqref{eq:operator}.
	Then the following claims hold:
	\begin{myenum}
		\item $\sess(\Op_\eps) = [1/2,+\infty)$;
		\item $\inf\s(\Op_\eps) \ge \frac{1 - \eps}{2}$;
		\item $1 \le \cN_{1/2}(\Op_\eps) < \infty$;
		\item  $\cN_{1/2}(\Op_\eps) = 1$ for all sufficiently small $\eps > 0$.
	\end{myenum}		
\end{prop}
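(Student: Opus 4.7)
My plan is to dispatch the four assertions separately. Claim (ii) is the most elementary and should yield to a direct quadratic-form estimate using the form \eqref{eq:frm}. Claim (i) is a standard consequence of a Weyl-sequence construction together with a compact-perturbation argument, essentially recovering \cite[Thm.~2.1]{S06}. Claim (iii) splits into a variational lower bound and a finiteness statement whose proof relies on the same compactness input as (i). Claim (iv) is the weak-coupling uniqueness, and will be the principal obstacle: I expect to need the Birman--Schwinger reduction developed later in this paper.

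For (ii) I estimate the $\delta$-contribution to \eqref{eq:frm}. The one-dimensional trace inequality $|u(0,y)|^2 \le \|u(\cdot,y)\|_{L^2(\dR)}\|\p_x u(\cdot,y)\|_{L^2(\dR)}$ followed by Cauchy--Schwarz in $y$ with weight $|y|$ yields
\[
	\left|\eps\sqrt{2}\big(\sign(y)|y|^{1/2}u|_\Sigma,|y|^{1/2}u|_\Sigma\big)_\dR\right|
	\le \eps\sqrt{2}\,\|yu\|_{\dR^2}\|\p_x u\|_{\dR^2}.
\]
The Young-type inequality $\sqrt{2}ab \le a^2 + \tfrac{b^2}{2}$ applied with $a=\|\p_x u\|$, $b=\|yu\|$ gives
\[
	\eps\sqrt{2}\|yu\|\|\p_x u\| \le \eps\|\p_x u\|^2 + \tfrac{\eps}{2}\|yu\|^2,
\]
so that
\[
	\frm_\eps[u] \ge (1-\eps)\|\p_x u\|^2 + \tfrac12\|\p_y u\|^2 + \tfrac{1-\eps}{2}\|yu\|^2.
\]
Since the lowest eigenvalue of $-\p_y^2+y^2$ is $1$, one has $\|\p_y u\|^2 + \|yu\|^2 \ge \|u\|^2$, and the right-hand side dominates $\tfrac{1-\eps}{2}\|u\|^2$, which proves (ii).

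For (i) the inclusion $[1/2,\infty)\subset \sess(\Op_\eps)$ follows from Weyl sequences localised far from $\Sigma$: for $\lambda = k^2+\nu_n$ with $n\in\dN_0$, $k\ge 0$, I take $u_N(x,y)=\phi_N(x-x_N)e^{\ii kx}\chi_n(y)$ with a spreading bump $\phi_N$ and $x_N\arr \infty$, so that the singular coupling eventually vanishes on them. The converse inclusion and the finiteness in (iii) both follow from the fact that, below $1/2$, the difference between the resolvent of $\Op_\eps$ and that of the decoupled Hamiltonian is compact: the transverse Green's functions in the modes $n\ge 1$ decay exponentially in $x$ and the singular coupling through $\Sigma$ is of Hilbert--Schmidt type when tested against $\ell^2$-sequences. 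The existence of at least one bound state, $\cN_{1/2}(\Op_\eps)\ge 1$, is variational: taking a trial function $u=f_0(x)\chi_0(y)+f_1(x)\chi_1(y)$ and using the recurrence \eqref{eq:reccurence} (which gives $\sqrt{2}y\chi_0=\chi_1$), the only surviving $\delta$-contribution is the cross term $2\eps f_0(0)f_1(0)$, and choosing $f_0,f_1$ as exponentials of opposite sign at the origin makes $\frm_\eps[u]<\tfrac12\|u\|^2$ for every $\eps>0$.

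For (iv) I plan to apply the Birman--Schwinger principle that underpins this paper: eigenvalues of $\Op_\eps$ below $1/2$ correspond to values of $\lambda$ for which a certain operator $\sfK_\eps(\lambda)$ on $\ell^2(\dN_0)$ has $1$ as an eigenvalue. As $\eps\arr 0+$, the only block of $\sfK_\eps(\lambda)$ that can produce an eigenvalue equal to $1$ as $\lambda\arr (1/2)-$ is the rank-one coupling between the $\chi_0$ and $\chi_1$ modes; all remaining matrix elements carry an additional factor of $\eps$, and the other thresholds $\nu_n$ with $n\ge 1$ lie at distance at least $1$ above $1/2$, producing exponentially small transverse Green's functions. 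A uniform-in-$\eps$ norm control on the remainder then forces $\cN_{1/2}(\Op_\eps)=1$ for all sufficiently small $\eps$. The main technical step, which I regard as the principal obstacle, is to justify this reduction rigorously — precisely the weak-coupling analysis of $\sfK_\eps(\lambda)$ to be carried out later in the paper.
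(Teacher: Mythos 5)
The paper itself does not prove Proposition~\ref{prop:known}; all four items are imported from the literature (items (i)--(iii) from \cite[Lem.~2.1]{S04FAA} and \cite[Thm.~3.1]{S06}, item (iv) from \cite[Thm.~3.2]{S04FAA} and \cite[\S 10.1]{S06}), so any self-contained argument is necessarily a different route. Parts of yours are complete and correct: the proof of (ii) via the trace inequality, weighted Cauchy--Schwarz and $\sqrt2\,ab\le a^2+b^2/2$ is exactly the relative form-bound the paper invokes in Corollary~\ref{cor:BS}\,(i), and the two-mode trial function $f_0\chi_0+f_1\chi_1$ for $\cN_{1/2}(\Op_\eps)\ge1$ works as you say, since $\sqrt2\,y\chi_0=\chi_1$ makes the only surviving coupling term $2\eps f_0(0)f_1(0)$ while the diagonal terms $\int y\chi_n^2\,\dd y$ vanish by parity. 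The Weyl-sequence argument for $[1/2,+\infty)\subset\sess(\Op_\eps)$ is likewise standard.

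The genuine gap is the claim that the \emph{finiteness} of $\cN_{1/2}(\Op_\eps)$ ``follows from'' compactness of the resolvent difference with the decoupled operator. Compactness gives Weyl stability of the essential spectrum, hence $\sess(\Op_\eps)\subset[1/2,+\infty)$, but it does not prevent the discrete spectrum from accumulating at the threshold $1/2$ from below: a compact resolvent perturbation can create infinitely many eigenvalues converging to the bottom of the essential spectrum (long-range attractive Schr\"odinger potentials are the standard counterexample). Finiteness genuinely requires the Birman--Schwinger reduction of Theorem~\ref{thm:BS}, namely $\cN_{1/2}(\Op_\eps)=\sup_{\lm<1/2}\cN((1/\eps,+\infty);\sfJ(\lm))$, together with the observations that $\sess(\sfJ(\lm))=[-1,1]$ stays away from $1/\eps>1$ and that only the single entry $b_1(\lm)$ of the Jacobi matrix diverges as $\lm\uparrow1/2$, so at most one eigenvalue branch of $\sfJ(\lm)$ can be pushed across $1/\eps$ by that divergence; this is Solomyak's argument and it is also the mechanism behind (iv). As written, your (iv) is a plan rather than a proof: you defer the uniform control of the remainder to ``later in the paper'', but the later Lemmata~\ref{lem:aux1} and~\ref{lem:aux2} only count solutions inside a small disc around $\nu_0$, so to conclude $\cN_{1/2}(\Op_\eps)=1$ you would additionally need an a priori localisation such as Theorem~\ref{thm:discr}\,(i) to exclude eigenvalues elsewhere in $(0,1/2)$ -- a step your sketch does not supply.
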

Items~(i)-(iii) follow from~\cite[Lem 2.1]{S04FAA}
and~\cite[Thm. 3.1\,(1),(2)]{S06}. Item~(iv) is a consequence of~\cite[Thm. 3.2]{S04FAA} and~\cite[\S 10.1]{S06}.
Although we only deal with the sub-critical case, $\eps \in(0,1)$, 
we remark that in the critical case, $\eps = 1$, the spectrum of $\Op_1$ equals to $[0,+\infty)$
and that in the sup-critical case, $\eps > 1$, the spectrum of $\Op_\eps$ covers the whole real axis.
Finally, we mention that in most of the existing literature
on the subject not $\eps > 0$ itself but $\aa = \sqrt{2}\eps$ is 
chosen as the coupling parameter. We choose another normalization of the coupling
parameter in order to simplify formulae in the proofs of the main results.

%------------------------------------------------------------------------
\subsection{Main results}
%------------------------------------------------------------------------
While we are primarily interested in the resonances, as indicated in the introduction, we have also a claim to make about 
the discrete spectrum which we present here as our first main result and which complements the results listed 
in Proposition~\ref{prop:known}. 
\begin{thm}\label{thm:discr}
	Let the self-adjoint operator $\Op_\eps$, $\eps\in(0,1)$, 
	be as in~\eqref{eq:operator}. Then the following claims hold.
	\begin{myenum}
		\item $\lm_1(\Op_\eps) \ge  
		1 - \sqrt{\frac{1}{4}+ \eps^4}$ 
		for all $\eps \in(0,1)$.
		\item $	\lm_1(\Op_\eps) 
		= \nu_0 - \frac{\eps^4}{16} + \cO(\eps^5)$
		as $\eps\arr 0+$.
	\end{myenum}
\end{thm}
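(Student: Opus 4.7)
My plan is a Birman--Schwinger reduction that converts the spectral analysis of $\Op_\eps$ below the threshold $\nu_0=1/2$ into a problem about a Jacobi matrix on $\ell^2(\dN_0)$. Expanding $u\sim\{u_n\}$ in the Hermite basis and using the three-term recurrence~\eqref{eq:reccurence} to turn the boundary term in~\eqref{eq:frm} into a nearest-neighbour coupling of the traces $u_n(0)$, I rewrite
\begin{equation*}
\frm_\eps[u]-\lm\|u\|_{\dR^2}^2
=\sum_{n\in\dN_0}\bigl(\|u_n'\|_{\dR}^2+(\nu_n-\lm)\|u_n\|_{\dR}^2\bigr)
+2\eps\,\Re\sum_{n\in\dN_0}\sqrt{n+1}\,u_n(0)\overline{u_{n+1}(0)}.
\end{equation*}
For $\lm<1/2$ set $\kappa_n:=\sqrt{\nu_n-\lm}$; the sharp one-dimensional trace inequality $\|v'\|_{\dR}^2+\kappa^2\|v\|_{\dR}^2\ge 2\kappa|v(0)|^2$, with equality at $v(x)=v(0)e^{-\kappa|x|}$, applied channel-wise reduces~(i) via the min--max principle to showing that the Jacobi matrix
\begin{equation*}
\sfM(\lm):=\sfJ\bigl(\{2\kappa_n\},\,\{\eps\sqrt{n}\}\bigr)
\end{equation*}
is non-negative on $\ell^2(\dN_0)$ for every $\lm\le\lm_\star:=1-\sqrt{1/4+\eps^4}$.

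Because the diagonal part $\sfK(\lm):=\diag\{\kappa_n\}$ decreases monotonically in $\lm$, it suffices to check $\sfM(\lm_\star)\ge 0$. Conjugation by the bounded operator $\sfK(\lm)^{-1/2}$ turns this into the operator-norm inequality
\begin{equation*}
\bigl\|\sfK(\lm)^{-1/2}\,\sfJ\bigl(\{0\},\{\sqrt{n}\}\bigr)\,\sfK(\lm)^{-1/2}\bigr\|\le\frac{2}{\eps}.
\end{equation*}
The operator inside the norm is a zero-diagonal Jacobi matrix with off-diagonal entries $b_n=\sqrt{n}/\sqrt{\kappa_{n-1}\kappa_n}$, $n\ge 1$, whose norm is bounded by $2\sup_{n\ge 1}b_n$ (decompose it as $T+T^*$ for the weighted shift $T$ of norm $\sup_{n\ge 1}b_n$), so a sufficient condition is $\sup_{n\ge 1}b_n\le 1/\eps$. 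A short computation shows that $n\mapsto n/\bigl((\nu_{n-1}-\lm)(\nu_n-\lm)\bigr)$ is strictly decreasing on $n\ge 1$ whenever $\lm>-3/2$, which covers the entire range $\lm_\star\in(1-\sqrt{5}/2,1/2)$ corresponding to $\eps\in(0,1)$; hence $\sup_{n\ge 1}b_n=b_1=1/\sqrt{\kappa_0\kappa_1}$. The desired condition becomes $\kappa_0(\lm_\star)\kappa_1(\lm_\star)\ge\eps^2$, which by the identity $(\nu_0-\lm)(\nu_1-\lm)=(1-\lm)^2-1/4$ is exactly the defining equation of $\lm_\star$. This proves~(i).

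For~(ii) the same reduction identifies $\lm<1/2$ as an eigenvalue of $\Op_\eps$ precisely when the three-term recurrence
\begin{equation*}
2\kappa_n c_n+\eps\sqrt{n+1}\,c_{n+1}+\eps\sqrt{n}\,c_{n-1}=0,\qquad c_{-1}=0,
\end{equation*}
admits a nonzero solution with $\{c_n/\sqrt{\kappa_n}\}\in\ell^2(\dN_0)$. The $n=0$ equation gives $c_1=-2\kappa_0 c_0/\eps$; substitution into the $n=1$ equation together with $\kappa_1=\sqrt{1+\kappa_0^2}$ yields the secular relation
\begin{equation*}
4\kappa_0\sqrt{1+\kappa_0^2}\,c_0=\eps^2 c_0+\sqrt{2}\,\eps^2\,c_2.
\end{equation*}
The uniqueness up to scaling of the $\ell^2$-decaying branch of the recurrence forces $c_2/c_0\arr 0$ as $\eps\arr 0+$ along any branch with $\kappa_0\arr 0$, so the leading balance $4\kappa_0=\eps^2+o(\eps^2)$ produces $\lm_1(\Op_\eps)=\nu_0-\eps^4/16+o(\eps^4)$. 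To sharpen the remainder to $\cO(\eps^5)$ I would recast the eigenvalue condition as $F(\kappa_0,\eps)=0$, where $F$ is constructed from the decaying solution of the recurrence and depends jointly analytically on $(\kappa_0,\eps)$ near the origin, and then invoke the analytic implicit function theorem announced in the abstract; the non-degeneracy $\partial_{\kappa_0}F(0,0)\ne 0$ is read off from the explicit leading term. The principal technical obstacle is precisely to establish this joint analyticity of the Jost-type decaying solution across the threshold $\eps=0$, where the recurrence becomes degenerate.
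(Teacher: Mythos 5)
Your strategy coincides with the paper's in both parts: a Birman--Schwinger reduction to the Jacobi matrix with off-diagonal entries~\eqref{eq:bn}, a norm bound for (i), and the analytic implicit function theorem for (ii). However, each part has a concrete defect.

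In (i), after conjugation by $\sfK(\lm)^{-1/2}$ your sufficient condition is $\sup_{n\ge1}n^2/\big((\nu_{n-1}-\lm)(\nu_n-\lm)\big)\le\eps^{-4}$, and you dispose of the supremum by asserting that $n\mapsto n/\big((\nu_{n-1}-\lm)(\nu_n-\lm)\big)$ is strictly decreasing for $\lm>-3/2$. Two problems. First, the relevant quantity carries $n^2$, not $n$, and monotonicity of the latter does not transfer to the former since the extra factor $n$ is increasing. Second, the claim is false for the quantity you actually need: writing $(\nu_{n-1}-\lm)(\nu_n-\lm)=(n-\lm)^2-\tfrac14$, the function $g(n)=n^2/\big((n-\lm)^2-\tfrac14\big)$ tends to $1$ from below as $n\arr\infty$ whenever $\lm<0$, hence is eventually \emph{increasing} (e.g.\ $\lm=-0.1$ gives $g(10)<g(11)$), and $\lm_\star=1-\sqrt{1/4+\eps^4}$ is negative for $\eps^4>3/4$, inside your claimed range. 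What saves the argument is that you only need $g(n)\le g(1)$ at $\lm=\lm_\star$, i.e.\ $(n-\lm_\star)^2-\tfrac14\ge n^2\eps^4$ for all $n\in\dN$; this is true, and it is exactly what the paper verifies by showing that the roots $\lm_n^-(\eps)=n-\sqrt{1/4+n^2\eps^4}$ satisfy $\lm_{n+1}^-(\eps)-\lm_n^-(\eps)\ge1-\eps^2>0$, so the minimum over $n$ is attained at $n=1$. So (i) is repairable, but the step as written is wrong.

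In (ii) the gap is not peripheral: you yourself flag the joint analyticity of the secular function near the threshold as ``the principal technical obstacle'' and leave it open, and that is precisely where the content of the proof lies. Your leading-order balance only yields $\lm_1(\Op_\eps)=\nu_0-\eps^4/16+o(\eps^4)$, and even that rests on the unproved assertion $c_2/c_0\arr0$, which requires uniform control of the decaying solution precisely at the degenerate point $\eps=0$ (where your recurrence forces $c\equiv0$, so the ``Jost-type'' solution is not defined there by the recurrence alone). Moreover, the natural secular determinant is \emph{not} analytic in your variable $\kappa_0=(\nu_0-\lm)^{1/2}$: the Birman--Schwinger entries behave like $(\nu_0-\lm)^{-1/4}=\kappa_0^{-1/2}$, so the correct uniformizing variable is $\kp$ with $\lm=\nu_0-\kp^4$. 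The paper closes this gap by splitting $\sfJ_\varnothing(\lm)$ into a part whose resolvent is uniformly bounded near the threshold plus a rank-$\le3$ singular part, reducing the eigenvalue condition to a $3\times3$ determinant (Lemma~\ref{lem:aux1}), and then passing to the variable $\kp$ (for $n=0$ on the physical sheet this amounts to choosing the principal branch of $(\nu_0-\lm)^{1/4}$) so that the analytic implicit function theorem applies and returns $\kp(\eps)=\eps/2+\cO(\eps^2)$, whence the $\cO(\eps^5)$ remainder (Lemma~\ref{lem:aux2}). Without an argument of this type your proof of (ii) does not close.
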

Theorem~\ref{thm:discr}\,(i) is proven by means of 
Birman-Schwinger principle. The bound in Theorem~\ref{thm:discr}\,(i) 
is non-trivial for $\eps^4 < 3/4$. This bound
is better than the one in Proposition~\ref{prop:known}\,(ii) for small $\eps > 0$.

For the proof of Theorem~\ref{thm:discr}\,(ii) we combine 
Birman-Schwinger principle and the analytic implicit function theorem.
We expect that the error term $\cO(\eps^5)$ in Theorem~\ref{thm:discr}\,(ii)
can be replaced by  $\cO(\eps^6)$ because the operator $\Op_\eps$ 
has the same spectral properties as $\Op_{-\eps}$ for any $\eps\in (0,1)$. 
Therefore, the expansion of $\lm_1(\Op_{\eps})$ 
must be invariant with respect to interchange between $\eps$ and $-\eps$. 
In Lemma~\ref{lem:aux1} given in Section~\ref{sec:weak} we derive an
implicit scalar equation on $\lm_1(\Op_\eps)$. This equation 
gives analyticity of $\eps\mapsto\lm_1(\Op_\eps)$ for small $\eps$.
It can also be used to compute higher order
terms in the expansion of $\lm_1(\Op_\eps)$.
However, these computations might be quite tedious.

Our second main result concerns the resonances of $\Op_\eps$. 
Before formulating it, we need to define the resonances rigorously.
Let us consider the sequence of functions 
\begin{equation}\label{eq:r_n}
	r_n(\lm) := (\nu_n -\lm)^{1/2}, \qquad n\in\dN_0.
\end{equation}	
Each of them has two branches $r_n(\lm,l) := (\nu_n -\lm)^{1/2}_l$, $l=0,1$.
The vector-valued function $R(\lm) = (r_0(\lm),r_1(\lm),r_2(\lm),\dots)$
naturally defines the Riemann surface $\wh Z$ with uncountably many sheets.
With each sheet of $\wh Z$ we associate the set $E \subset \dN_0$
and the \emph{characteristic vector} $l^E$ defined as
\begin{equation}
	l^E := \{l_0^E,l_1^E,l_2^E,\dots\},\qquad 
	l_n^E := 
	\begin{cases}
	 	0,\qquad  n\notin E,\\
		1,\qquad  n\in E.\\
	 \end{cases}
\end{equation}
We adopt the convention that $l_{-1}^E = 0$.
The respective sheet of $\wh Z$ is convenient to denote by $Z_E$.
Each sheet $Z_E$ of $\wh Z$ can be identified with the set $\dC\setminus [\nu_0,+\infty)$
and we denote by $Z_E^\pm$ the parts of $Z_E$ corresponding to $\dC_\pm$. 
With the notation settled we define the realization of $R(\cdot)$
on $Z_E$ as
\begin{equation}\label{eq:RE}
	R_E(\lm) := 
	(r_0(\lm,l_0^E),r_1(\lm,l_1^E),r_2(\lm,l_2^E),\dots).
\end{equation}
The sheets $Z_E$ and $Z_F$ are \emph{adjacent} 
through the interval $(\nu_{n}, \nu_{n+1})\subset\dR$, $n\in\dN_0$,
($Z_E\sim_n Z_F$), if their characteristic vectors $l^E$ and $l^F$ satisfy
\[
\begin{split}
	l^F_k = 1 - l^E_k,&\qquad \text{for}~ k=0,1,2,\dots,n\\
	l^F_k = l^E_k,&\qquad \text{for}~ k > n.
\end{split}	
\]
We set $\nu_{-1} = -\infty$ and note that any sheet
$Z_E$ is adjacent to itself through $(\nu_{-1}, \nu_0)$.
In particular, the function $\lm\mapsto R_E(\lm)$ 
turns out to be componentwise analytic on the Riemann surface $\wh Z$.

The sequence $\sE = \{E_1,E_2,\dots, E_N\}$ of subsets of $\dN_0$
is called a \emph{path} if for any $k = 1,2,\dots, N-1$
the sheets $Z_{E_k}$ and $Z_{E_{k+1}}$  are adjacent. The following
discrete metric   
\begin{equation}\label{eq:rho}
	\rho(E,F) := 
	\inf\{N\in\dN_0 \colon \sE = \{E_1,E_2,\dots, E_N\},
	E_1 = E, E_N = F\}, 
\end{equation}
turns out to be convenient. 
The value $\rho(E,F)$ equals the number
of sheets in the shortest path connecting $Z_E$ and $Z_F$. Note that
for some sheets $Z_E$ and $Z_F$ a path between them does not exist 
and in this case we have
$\rho(E,F) = \infty$.
We identify the \emph{physical sheet} with the sheet $Z_\varnothing$
(for $E = \varnothing$).
A sheet $Z_E$ of $\wh Z$ is \emph{adjacent} to the physical sheet $Z_\varnothing$ if $\rho(E,\varnothing) = 1$
and it can be characterised by existence of $N\in\dN_0$ such that $l^E_n = 1$ \iff $n\le N$.
Further, we define the component 
\begin{equation}\label{eq:wtZ}
	\wt Z := \cup_{E\in \cE} Z_E 
	\subset \wh Z, \qquad \cE := \{E\subset\dN_0\colon 	
	\rho(E,\varnothing) < \infty\},
\end{equation}	 
of $\wh Z$ which plays a distinguished role in our considerations.
Any sheet in $\wt Z$ is located on a finite distance from the physical
sheet with respect to the metric $\rho(\cdot,\cdot)$.
The component $\wt Z$ of $\wh Z$ in~\eqref{eq:wtZ} can be
alternatively characterised as
\begin{equation}
	\wt Z = \cup_{F \in \cF} Z_F,
	\qquad \cF := \{F\subset\dN_0\colon \sup \{n\in\dN_0\colon l_n^F = 1\}  < \infty\}.
\end{equation}
%
%where $N_F := \sup\{n\in\dN_0\colon l_n^F = 1\}$. 
The number of the sheets in $\wt Z$ is easily seen to be countable.
In order to define the resonances of $\Op_\eps$ on $\wt Z$ we show that the resolvent of $\Op_\eps$ admits an extension to $\wt Z$ in a certain weak sense. 
\begin{prop}\label{prop:cont}
	For any $u\in L^2(\dR)$ and $n\in\dN_0$ the function
	\begin{equation}\label{eq:frr}
		\lm \mapsto \frr_{n,\eps}^\varnothing(\lm;u) := 
		\big\langle(\Op_\eps - \lm)^{-1} u\otimes \sfe_n, u\otimes \sfe_n\big\rangle
	\end{equation}
	admits unique meromorphic continuation $\frr_{n,\eps}^E(\cdot;u)$
	from the physical sheet $Z_\varnothing$ to any sheet $Z_E \subset \wt Z$.
\end{prop}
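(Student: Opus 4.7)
The strategy is to derive a Krein-type resolvent formula for $\Op_\eps$ and reduce the continuation problem to an analytic Fredholm analysis for a Jacobi-type operator on the boundary space $\ell^2(\dN_0)$. The three-term recurrence~\eqref{eq:reccurence} identifies multiplication by $\sqrt{2}\,y$ in the Hermite basis with the (unbounded) Jacobi matrix $\mathsf{J}^{\mathrm w}:=\sfJ(\{0\},\{\sqrt{n+1}\}_{n\in\dN_0})$, so the $\delta$-interaction boundary condition in~\eqref{eq:operator} reads abstractly as ``jump of the Neumann trace $=\eps\mathsf{J}^{\mathrm w}\cdot$(Dirichlet trace)'' on $\ell^2(\dN_0)$. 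Solving $(\Op_\eps-\lm)U=F$ by adding to $(\Op_0-\lm)^{-1}F$ a linear combination $\{\alpha_m e^{-r_m(\lm)|x|}\}$ of decaying half-line modes and determining $\{\alpha_m\}$ from this boundary condition, I expect a Krein-type formula of the shape
\[
(\Op_\eps-\lm)^{-1} = (\Op_0-\lm)^{-1} - \eps\,\gamma(\lm)\,\mathsf{T}_\eps(\lm)^{-1}\mathsf{J}^{\mathrm w}\,\diag\bigl(\{\tfrac{1}{2r_m(\lm)}\}\bigr)\,\gamma(\bar\lm)^*,
\]
where $\gamma(\lm):\{c_m\}\mapsto\{c_m e^{-r_m(\lm)|x|}\}$ is a Poisson map and $\mathsf{T}_\eps(\lm) := 2\,\diag(\{r_m(\lm)\}) + \eps\mathsf{J}^{\mathrm w}$ is the corresponding Krein parameter.

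Taking the matrix element between $u\otimes\sfe_n$ and itself yields an explicit identity
\[
\frr_{n,\eps}^\varnothing(\lm;u) = \Phi_u\bigl(r_n(\lm)\bigr) - \eps\,\frac{h_n(\lm)\,\widetilde h_n(\lm)}{2r_n(\lm)}\Bigl(\sqrt{n+1}\,(\mathsf{T}_\eps(\lm)^{-1})_{n,n+1} + \sqrt{n}\,(\mathsf{T}_\eps(\lm)^{-1})_{n,n-1}\Bigr),
\]
in which $\Phi_u(r):=\iint_{\dR^2}\tfrac{e^{-r|x-y|}}{2r}u(x)\overline{u(y)}\dd x\dd y$ and $h_n(\lm),\widetilde h_n(\lm)$ are line integrals of $e^{-r_n(\lm)|x|}$ against $u$ and $\bar u$, respectively. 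The three scalar ingredients $\Phi_u$, $h_n$, $\widetilde h_n$ are each meromorphic functions of the single variable $r_n(\lm)$, so they extend to every sheet $Z_E$ as soon as $r_n(\lm)$ is replaced by the corresponding branch $r_n(\lm,l_n^E)$. The whole problem is thus reduced to meromorphic continuation of the entries $(\mathsf{T}_\eps(\lm)^{-1})_{n,n\pm 1}$ to every $Z_E\subset\wt Z$.

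To handle this I factor $\mathsf{T}_\eps(\lm) = 2\,\diag(\{r_m(\lm)\})\,(I+\eps\mathsf{K}(\lm))$ with $\mathsf{K}(\lm):=\diag(\{\tfrac{1}{2r_m(\lm)}\})\mathsf{J}^{\mathrm w}$. Although $\mathsf{J}^{\mathrm w}$ is unbounded, the product $\mathsf{K}(\lm)$ is bounded on $\ell^2(\dN_0)$ because its non-zero entries $\sqrt{m+1}/(2r_m(\lm))$ and $\sqrt{m}/(2r_m(\lm))$ stay bounded in $m$. The asymptotics $r_m(\lm) = \sqrt{m}\,(1+o(1))$ as $m\to\infty$ imply that all off-diagonal entries of $\mathsf{K}(\lm)-\sfJ_0$ tend to zero, so this tridiagonal difference is a norm limit of finite-rank operators and therefore compact. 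Since $\|\sfJ_0\|=1$ and $\eps\in(0,1)$, the operator $I+\eps\sfJ_0$ is boundedly invertible, and the algebraic identity
\[
I+\eps\mathsf{K}(\lm) = (I+\eps\sfJ_0)\bigl(I + \eps(\mathsf{K}(\lm)-\sfJ_0)(I+\eps\sfJ_0)^{-1}\bigr)
\]
places the inversion of $I+\eps\mathsf{K}(\lm)$ within the scope of the analytic Fredholm theorem, with compact perturbation $\eps(\mathsf{K}(\lm)-\sfJ_0)(I+\eps\sfJ_0)^{-1}$ depending analytically on $\lm$.

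To pass to $Z_E\subset\wt Z$ I replace every $r_m(\lm)$ in the above by the branch $r_m(\lm,l_m^E)$, obtaining operators $\mathsf{T}_{\eps,E}(\lm)$ and $\mathsf{K}_E(\lm)$. Because $Z_E\subset\wt Z$ forces $l_m^E=0$ for all but finitely many $m$ (this is the defining property of $\cF$), the branches $r_m(\lm,l_m^E)$ are analytic on $Z_E\cong\dC\setminus[\nu_0,+\infty)$ and retain the asymptotics $r_m(\lm,l_m^E)\sim\sqrt{m}$, so $\mathsf{K}_E(\lm)$ is a bounded $\lm$-analytic operator and $\mathsf{K}_E(\lm)-\sfJ_0$ is still compact. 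Invertibility at some point of $Z_E$ is secured by analytic continuation of $(I+\eps\mathsf{K}(\cdot))^{-1}$ from the physical sheet along the chain of adjacencies linking $Z_\varnothing$ to $Z_E$, after which the analytic Fredholm theorem supplies a meromorphic inverse on all of $Z_E$. Substituting back into the explicit formula for $\frr_{n,\eps}^\varnothing$ produces the required extension $\frr_{n,\eps}^E(\cdot;u)$, and uniqueness is immediate from the identity theorem. I expect the main technical difficulties to be the careful derivation of the Krein-type formula --- in particular the identification of the correct Jacobi-type Krein parameter $\mathsf{T}_\eps(\lm)$ --- together with the uniform asymptotic estimates $r_m(\lm,l_m^E)\sim\sqrt m$ needed for compactness of $\mathsf{K}_E(\lm)-\sfJ_0$ on arbitrary sheets.
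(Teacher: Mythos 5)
Your proposal is correct and follows essentially the same route as the paper: a Krein-type resolvent formula reduces the matrix element to the entries of $(\sfI+\eps\,(\text{Jacobi operator}))^{-1}$ on $\ell^2(\dN_0)$, compactness of the difference with $\sfJ_0$ plus the analytic Fredholm theorem gives meromorphy, and the extension to a general $Z_E\subset\wt Z$ is obtained by iterating across adjacent sheets. The only (cosmetic) difference is that you derive the formula with the unsymmetrized Krein parameter $2\,\diag(\{r_m(\lm)\})+\eps\mathsf{J}^{\mathrm w}$, whereas the paper cites the symmetrized version from Solomyak, whose Jacobi matrix $\sfJ_\varnothing(\lm)$ is your $\mathsf{K}(\lm)$ conjugated by $\diag(\{(2r_m(\lm))^{1/2}\})$.
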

The proof of Proposition~\ref{prop:cont} is postponed until Appendix~\ref{app:A}. 
Now we have all the tools to define resonances of $\Op_\eps$ on $\wt Z$.
\begin{dfn}\label{def:resonances}
	Each resonance of $\Op_\eps$ on $Z_E\subset \wt Z$
	is identified with a pole of $\frr_{n,\eps}^E(\cdot;u)$ for 
	some $u\in L^2(\dR)$ and $n\in\dN_0$.
	%However, it is not necessary for our purposes.}.
	The set of all the resonances for $\Op_\eps$ on the sheet $Z_E$
	is denoted by $\cR_E(\eps)$.
\end{dfn}
Our definition of resonances for $\Op_\eps$ is consistent with~\cite[\S XII.6]{RS-IV}, 
see also~\cite[Chap. 2]{EK15} and~\cite{M93} for multi-threshold case. 
It should be emphasized that by the spectral theorem for self-adjoint operators
eigenvalues of $\Op_\eps$ are also regarded as resonances in the sense of Definition~\ref{def:resonances} lying on the physical sheet $Z_\varnothing$. 
This allows us to treat the eigenvalues and `true' resonances on the same
footing. Needless to say, bound states and true resonances correspond to different
physical phenomena and their equivalence in this paper is merely a useful mathematical
abstraction.

According to Remark~\ref{rem:symmetric} below the set of the resonances 
for $\Op_\eps$ on $Z_E$ is symmetric with respect to the real axis. Thus, it suffices
to analyse resonances on $Z_E^-$. Now we are prepared to formulate the main result on resonances.
%The poles of this continuation 
%are called \emph{resonances} of $\Op_\eps$.
%
\begin{thm}\label{thm:res}
	Let the self-adjoint operator $\Op_\eps$, $\eps\in(0,1)$,
	be as in~\eqref{eq:operator}.
	Let the sheet $Z_E\subset \wt Z$
	of the Riemann surface $\wh Z$ be fixed.
	Define 	the associated set by
	\[
		\cS(E)  := \big\{n\in\dN\colon 
		(l_{n-1}^E, l_n^E,l_{n+1}^E) \in \{(1,0,0),(0,1,1)\}\big\}.
	\]
	Let $\cR_E(\eps)$ be as in Definition~\ref{def:resonances}
	and set $\cR_E^-(\eps) := \cR_E(\eps) \cap\dC_-$.
	Then the following claims hold.
	\begin{myenum}
		\item $\cR_E^-(\eps)
		\subset \cU(\eps) :=
		\lbra\{\lm\in \dC_-
		\colon 	|\nu_{n-1} -\lm||\nu_n - \lm| \le \eps^4 n^2,~
		\forall n\in\dN\rbra\}$.
		\item For any $n\in\cS(E)$ and sufficiently small $\eps > 0$
		there is exactly one resonance $\lm_n^E(\Op_\eps) \in \dC_-$
		of $\Op_\eps$ on $Z_E^-$ lying in a neighbourhood of $\nu_n$, 
		with the expansion
		\begin{equation}\label{eq:res_asymp}
			\lm_n^E(\Op_\eps) 
			= 
			\nu_n - \frac{\eps^4}{16}\big[(2n + 1)+  2n(n+1)\ii\big] +  \cO(\eps^5),
			\qquad\eps\arr 0+.
		\end{equation} 
		\item For any $n\in\dN\setminus \cS(E)$ 
		and all sufficiently small $\eps,r > 0$
		\[
			\cR_E^-(\eps)\cap \dD_r(\nu_n) = \varnothing.
		\]
	\end{myenum}
\end{thm}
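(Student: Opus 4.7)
The plan is to recast the resonance problem via the Birman--Schwinger principle as a spectral condition for a Jacobi matrix, and then to analyse the reduced problem separately on the locus where the Jacobi entries are small (giving (i)) and in shrinking neighbourhoods of the thresholds (giving (ii)--(iii)). Expanding in the Hermite basis and using that $(-\p_x^2+\nu_n-\lm)^{-1}$ on $L^2(\dR)$ has point-kernel $(2r_n(\lm))^{-1}$, the perturbation $-\sqrt{2}\eps y\delta(x)$---where $y$ acts as a Jacobi matrix in the Hermite basis---produces an asymmetric Birman--Schwinger operator on $\ell^2(\dN_0)$. After a similarity by $\diag\bigl((r_n(\lm,l_n^E))^{1/2}\bigr)$, whose analytic continuation to each sheet $Z_E\subset \wt Z$ is afforded by Proposition~\ref{prop:cont}, one obtains the complex-symmetric Jacobi matrix $\sfK_\eps^E(\lm) = \sfJ(\{0\},\{b_n^E(\lm)\})$ with
\[
	b_n^E(\lm) \;=\; \frac{\eps\sqrt{n}}{2\sqrt{r_{n-1}(\lm,l_{n-1}^E)\,r_n(\lm,l_n^E)}},\qquad n\in\dN,
\]
and the property that $\lm\in Z_E$ is a resonance iff $-1\in \s(\sfK_\eps^E(\lm))$.

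For (i), if $\lm\in\cR_E^-(\eps)$, then $\|\sfK_\eps^E(\lm)\|\ge 1$. Combining this with the Gershgorin-type bound $\|\sfK_\eps^E(\lm)\|\le \sup_{n\in\dN}\bigl(|b_n^E(\lm)|+|b_{n+1}^E(\lm)|\bigr)$, valid for any zero-diagonal complex-symmetric Jacobi matrix, we conclude that some index $m\in\dN$ satisfies $|b_m^E(\lm)|\ge 1/2$. Squaring and using $|r_k(\lm,l_k^E)|=|\nu_k-\lm|^{1/2}$ yields $\eps^4 m^2 \ge |\nu_{m-1}-\lm|\,|\nu_m-\lm|$, placing $\lm$ in $\cU(\eps)$.

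For items (ii) and (iii), fix $n\in\dN$ and restrict $\lm$ to a disc $\dD_r(\nu_n)\subset Z_E$ of radius $r=O(\eps^4)$. On this disc only $r_n(\lm,l_n^E)$ is small, of order $\eps^2$, so only $b_n^E$ and $b_{n+1}^E$ are $O(1)$ while all other $b_k^E$ are $O(\eps)$. A Feshbach reduction onto $\mathop{\rm span}(\sfe_{n-1},\sfe_n,\sfe_{n+1})$---legitimate because on the orthogonal complement $\sfI+\sfK_\eps^E(\lm) = \sfI + O(\eps)$ is invertible by Neumann series---reduces $-1\in\s(\sfK_\eps^E(\lm))$ to a scalar analytic equation whose leading order reads
\[
	r_n(\lm,l_n^E) \;=\; \frac{\eps^2}{4}\Bigl(\frac{n}{r_{n-1}(\lm,l_{n-1}^E)}+\frac{n+1}{r_{n+1}(\lm,l_{n+1}^E)}\Bigr) + O(\eps^3).
\]
Evaluated at $\lm=\nu_n$, with $r_{n-1}(\nu_n,l_{n-1}^E)=(-1)^{l_{n-1}^E}\ii$ and $r_{n+1}(\nu_n,l_{n+1}^E)=(-1)^{l_{n+1}^E}$, the right-hand side lies in one of the four quadrants of $\dC$ depending on $(l_{n-1}^E,l_{n+1}^E)$. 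Matching this with the quadrant forced on $r_n(\lm,l_n^E)$ by $\lm\in Z_E^-$---namely $\Re,\Im>0$ if $l_n^E=0$ and $\Re,\Im<0$ if $l_n^E=1$---selects exactly the two sign patterns $(1,0,0)$ and $(0,1,1)$ defining $\cS(E)$. For $n\in\cS(E)$ the analytic implicit function theorem in the variable $r_n$ (with $\p_{r_n}F(0,0)=1$) produces a unique resonance with $r_n = \tfrac{\eps^2}{4}((n+1)+n\ii) + O(\eps^3)$ (case $(1,0,0)$) or its negative (case $(0,1,1)$); squaring and using $\lm=\nu_n-r_n^2$ yields expansion~\eqref{eq:res_asymp}. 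For $n\in\dN\setminus\cS(E)$ the leading-order equation has no solution with $\lm\in Z_E^-$, so uniqueness from the implicit function theorem precludes resonances in $\dD_r(\nu_n)\cap Z_E^-$ for small $\eps$.

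The main obstacle is the analytic justification of the Feshbach reduction uniformly on the shrinking disc $\dD_r(\nu_n)$: one must establish uniform invertibility, and joint analytic dependence on $(\lm,\eps)$, of $\sfI+\sfK_\eps^E(\lm)$ restricted to $\mathop{\rm span}(\sfe_{n-1},\sfe_n,\sfe_{n+1})^\perp$. This demands uniform Gershgorin-type bounds on the tail entries $b_k^E(\lm)$ for $k$ far from $n$, together with a careful bookkeeping of square-root branches on $Z_E$ throughout the symmetrization---indeed, the emergence of $\cS(E)$ as precisely the admissible index set is the content of that branch analysis.
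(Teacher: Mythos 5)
Your overall strategy coincides with the paper's: part (i) is obtained from the Birman--Schwinger characterisation $\ker(\sfI+\eps\sfJ_E(\lm))\ne\{0\}$ together with a norm bound on the Jacobi entries (your Schur-test bound and the paper's Cauchy--Schwarz estimate give the same condition $|\nu_{m-1}-\lm|\,|\nu_m-\lm|\le\eps^4m^2$ for some $m$), and parts (ii)--(iii) proceed by a finite-rank reduction near $\nu_n$ to a scalar analytic equation, the analytic implicit function theorem, and a quadrant analysis of the leading coefficient; your Feshbach reduction onto $\mathrm{span}(\sfe_{n-1},\sfe_n,\sfe_{n+1})$ is equivalent to the paper's splitting $\sfJ_E=\sfS_{n,E}+\sfT_{n,E}$ with $\sfT_{n,E}$ of rank at most three. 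Your leading-order equation $r_n=\tfrac{\eps^2}{4}\bigl(n/r_{n-1}+(n+1)/r_{n+1}\bigr)+O(\eps^3)$, the threshold values $r_{n\mp1}(\nu_n,\cdot)$, the resulting selection of the patterns $(1,0,0)$ and $(0,1,1)$, and the expansion~\eqref{eq:res_asymp} all check out against the paper's $z_{n,E}$.

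The genuine gap is in the justification of the implicit-function-theorem step. You apply the analytic IFT ``in the variable $r_n$'' to a function whose coefficients contain $r_{n-1}(\lm,l_{n-1}^E)$ --- and, through the tail of the Feshbach reduction, all $r_k(\lm,l_k^E)$ with $k\le n-1$ --- with the branch indices of $Z_E$ held fixed. Each such $r_k$ has its branch cut along $(\nu_k,+\infty)\ni\nu_n$; after the substitution $\lm=\nu_n-w^2$ this cut lands on the real and imaginary axes of the $w$-disc, so your $F(\eps,w)$ is analytic only on two open quadrants (the images of $Z_E^\pm$) and is \emph{not} analytic in any neighbourhood of $(\eps,w)=(0,0)$. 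The IFT cannot be invoked at that boundary point as written, and without it the uniqueness assertion on which your proof of (iii) rests (``no solution lands in the $Z_E^-$ sector, hence no resonance'') has no basis. The missing ingredient is exactly the construction of Assumption~\ref{assump} and Lemma~\ref{lem:aux2}: one glues the four mutually adjacent sheets $Z_E\sim_{n-1}Z_F\sim_n Z_G\sim_{n-1}Z_H$ into a single matrix-valued function of $\kappa=(\nu_n-\lm)^{1/4}$ (a $w=(\nu_n-\lm)^{1/2}$ gluing of the corresponding four sheet-halves would serve equally well), which \emph{is} analytic on a full punctured disc and extends to its centre; only then does the IFT produce a unique solution branch, and deciding into which sector --- i.e.\ onto which sheet --- it falls becomes your quadrant-matching argument. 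Your closing remark attributes the ``branch bookkeeping'' to the symmetrisation and to the emergence of $\cS(E)$, but the bookkeeping is needed one step earlier: it is what makes the reduced scalar equation analytic at the threshold in the first place.
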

\begin{figure}
\centering
\includegraphics[width=0.7\textwidth]{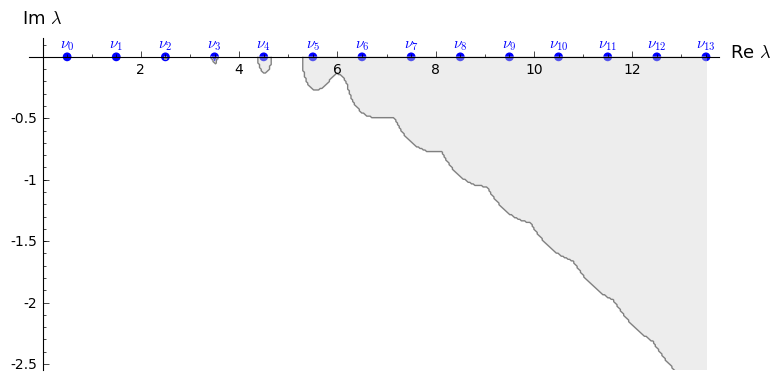}
\caption{The region $\cU(0.12)$ (for $\eps = 0.12$) from Theorem~\ref{thm:res}\,(i) (in grey) consists of $6$ connected components. 
The components located in the neighbourhoods of the points $\nu_0,\nu_1$, $\nu_2$, $\nu_3$,
are not visible because of being too small. The plot is performed with the aid of \emph{Sagemath}.}
\label{fig:free}
\end{figure}
%Discussion
In view of Theorem~\ref{thm:res}\,(i) for  sufficiently small $\eps > 0$
the resonances of $\Op_\eps$ on any sheet of $\wt Z$
are located in some neighbourhoods of the thresholds $\nu_n$
(see Figure~\ref{fig:free}).
Such behaviour is typical for problems with many thresholds; see \eg \cite{DEM01, EGST96} and~\cite[\S 2.4, 3.4.2]{EK15}. 
Note also that the estimate in Theorem~\ref{thm:res}\,(i)
reflects the correct order in $\eps$ in the weak coupling limit $\eps \arr 0+$
given in Theorem~\ref{thm:res}\,(ii).
However, the coefficient of $\eps^4$ in the definition
of $\cU(\eps)$ can be probably improved.
Observe also that $\cR_E^-(\eps) \subset \cU(1)$
for any $\eps \in(0,1)$.

According to Theorem~\ref{thm:res}\,(ii)-(iii) existence
of a resonance near the threshold $\nu_n$,  $n\in\dN$, on a sheet $Z_E$ for small $\eps > 0$ depends only on the branches chosen for $r_{n-1}(\lm)$, $r_n(\lm)$, $r_{n+1}(\lm)$
on $Z_E$. Although, one can not exclude that higher order terms 
in the asymptotic expansion~\eqref{eq:res_asymp}
depend on the branches chosen for other square roots.
By exactly the same reason as in Theorem~\ref{thm:discr}\,(ii), 
we expect that the error term $\cO(\eps^5)$ in Theorem~\ref{thm:res}\,(ii)
can be replaced by $\cO(\eps^6)$.
Theorem~\ref{thm:res}\,(ii)-(iii) are proven by means of the
Birman-Schwinger principle and the analytic implicit function theorem.
The implicit scalar equation on resonances derived in Lemma~\ref{lem:aux1}
gives analyticity of $\eps\mapsto\lm_n^E(\Op_\eps)$ for small $\eps > 0$
and, as in the bound state case, it can be used to compute further terms in the expansion of $\lm_n^E(\Op_\eps)$.

We point out that according
to numerical tests in~\cite{ELT} some resonances emerge from the inner points of the intervals $(\nu_n,\nu_{n+1})$, $n\in\dN_0$, as $\eps\arr 1-$. The mechanism 
of creation for these resonances is unclear at the moment. 
%
%Figures
\begin{example}
	Let $E = \{1,2,4,5\}$. In this case $l^E = \{0,1,1,0,1,1,0,0,0,0,0,\dots\}$ 
	and we get that $\cS(E) = \{1,4,6\}$. 
	By Theorem~\ref{thm:res}\,(ii)-(iii) for all sufficiently small
	$\eps > 0$ there will be exactly one resonance on $Z_E^-$
	near $\nu_1$, $\nu_4$, $\nu_6$ and no resonances near
	the thresholds $\nu_n$ with $n \in\dN\setminus\{1,4,6\}$. 
	We confirm this result by numerical tests whose outcome is shown 
	in Figures~\ref{fig:weak} and~\ref{fig:weak2}. 

\begin{figure}[htb!]
	\includegraphics[width=0.97 \textwidth]{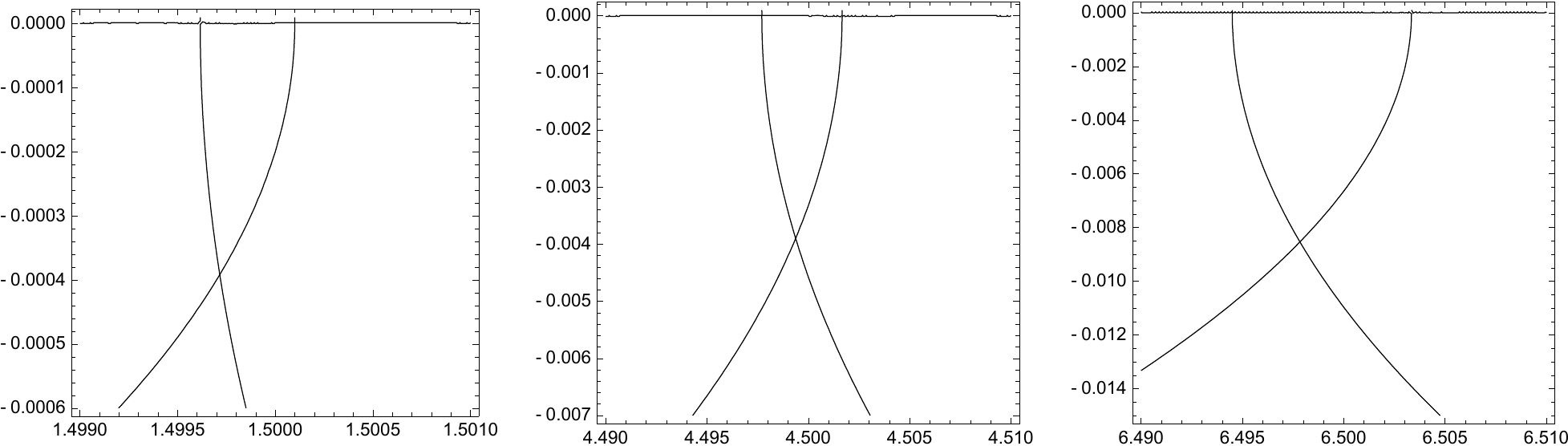}
	\vspace{2ex}
	\caption{Resonances of $\Op_\eps$ with $\eps = 0.2$ lying on $Z_E^-$ with $E = \{1,2,4,5\}$ are computed numerically with the help
	of \emph{Mathematica}. Unique weakly coupled resonances near the thresholds $\nu_1 = 1.5$, $\nu_4 = 4.5$, $\nu_6 = 6.5 $ 
	are located at the intersections of the curves.} \label{fig:weak}
\end{figure}

	To plot Figure~\ref{fig:weak} 
	we used the condition on resonances in Theorem~\ref{thm:BS_resonance} below.
	The infinite Jacobi matrix in this condition was truncated up to a reasonable finite size. 
	Along the curves, respectively, the real and the imaginary part
	of the determinant of the truncated matrix vanishes. At the points of intersection of the curves
	the determinant itself vanishes. These points are expected to be
	close to true resonances\footnote{The analysis of convergence of the numerical method is beyond
	our scope.}. We have also verified numerically that resonances do not 
	exist near other low-lying thresholds $\nu_n$ with $n\in\dN\setminus\{1,4,6\}$
	which corresponds well to Theorem~\ref{thm:res}. In Figure~\ref{fig:weak2}
	we summarise the results of all the numerical tests.

\begin{figure}[htb!]
	\includegraphics[width=0.75 \textwidth]{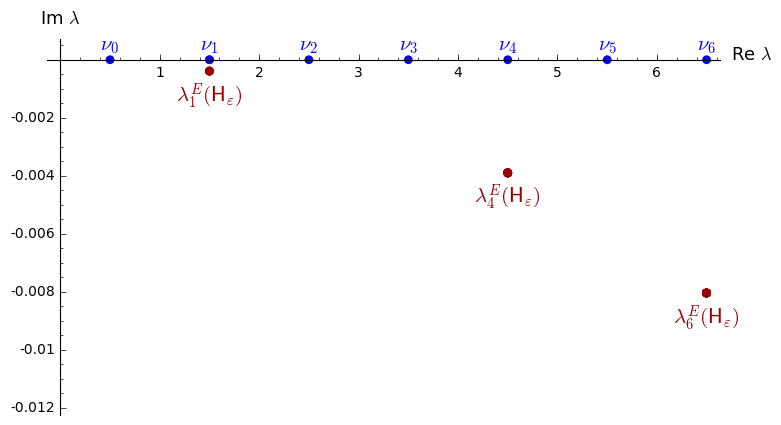}
	\caption{Resonances of $\Op_\eps$ with $\eps = 0.2$ lying on $Z_E^-$ with $E = \{1,2,4,5\}$.} 
	\label{fig:weak2}
\end{figure}

        %see Figure~\ref{fig:weak}. 
\end{example}
Finally, we mention that no attempt has been made here to
analyse the multiplicities of the resonances
and to investigate
resonances lying on $\wh Z\setminus \wt Z$. 

\subsection*{Structure of the paper}
Birman-Schwinger-type principles 
for characterisation of eigenvalues and resonances of $\Op_\eps$
are provided in Section~\ref{sec:BS}.
Theorem~\ref{thm:discr}\,(i) on a lower bound for the first eigenvalue 
and Theorem~\ref{thm:res}\,(i) on resonance free region are proven 
in Section~\ref{sec:free}. The aim of Section~\ref{sec:weak} is to prove Theorem~\ref{thm:discr}\,(ii) and Theorem~\ref{thm:res}\,(ii)-(iii) on weakly coupled bound states and resonances. 
The proofs of technical statements
formulated in Proposition~\ref{prop:cont} and Theorem~\ref{thm:BS_resonance} are postponed until Appendix~\ref{app:A}.

%*****************************************************************
\section{Birman-Schwinger-type conditions}
%*****************************************************************
\label{sec:BS}

Birman-Schwinger principle is a powerful tool for analysis of the discrete spectrum
of a perturbed operator in the spectral gaps of the unperturbed one. This principle
has also various other applications. Frequently, it can be generalized 
to detect resonances, defined as the poles of a meromorphic continuation of the
(sandwiched) resolvent from the physical sheet to non-physical sheet(s) 
of the underlying Riemann surface. In the model under consideration we encounter yet another manifestation of this principle.

In order to formulate a Birman-Schwinger-type condition on the bound states
for $\Op_\eps$ we introduce the sequence of functions
\begin{equation}\label{eq:bn}
	b_n(\lm) := \frac{n^{1/2}}{2(\nu_n - \lm)^{1/4}(\nu_{n-1} - \lm)^{1/4}}, \qquad n\in\N,
\end{equation}
and the off-diagonal Jacobi matrix
\begin{equation}\label{eq:Jacobi}
	\sfJ(\lm) = \sfJ\lbra(\{0\},\{b_n(\lm)\}\rbra), 
	\qquad \lm \in \lbra(0,\nu_0\rbra).
\end{equation}
Recall that we use the same symbol $\sfJ(\lm)$ for the operator 
in $\ell^2(\N_0)$ generated by this matrix. 
It is straightforward to check that the operator 
$\sfJ(\lm)$ is bounded and self-adjoint.
It can be easily verified that the difference $\sfJ(\lm) - \sfJ_0$ is a compact
operator. Therefore, one has $\sess(\sfJ(\lm)) = \sess(\sfJ_0) = [-1,1]$. 
Moreover, the operator $\sfJ(\lm)$ has simple eigenvalues $\pm \mu_n$, 
$\mu_n > 1$, with the only possible accumulation points at $\mu = \pm 1$.
%Now we can formulate the Birman-Schwinger-type principle.
%
\begin{thm}\cite[Thm. 3.1]{S04FAA}\label{thm:BS}
	Let the self-adjoint operator $\Op_\eps$, $\eps \in (0,1)$, 
	be as in~\eqref{eq:operator}
	and let the Jacobi matrix $\sfJ(\lm)$ be as in~\eqref{eq:Jacobi}. 
	Then the relation
	\begin{equation}\label{eq:BS}
		\cN((0,\lm); \Op_\eps) = \cN( (1/\eps,+\infty); \sfJ(\lm) )	
	\end{equation} 
	holds for all $\lm \in (0,\nu_0)$.
\end{thm}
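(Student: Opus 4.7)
The plan is a Birman--Schwinger reduction via the Glazman variational principle. For $\lm \in (0,\nu_0)$ the form $\frm_0 - \lm\|\cdot\|^2$ is strictly positive (for $u \sim \{u_n\}$ it equals $\sum_n\int_\dR(|u_n'|^2 + (\nu_n-\lm)|u_n|^2)\dd x \ge (\nu_0-\lm)\|u\|^2$), which enables a channel-wise partial minimisation.

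\emph{Step 1: Boundary reduction.} Since $\Op_\eps \ge (1-\eps)/2 > 0$ by Proposition~\ref{prop:known}\,(ii), Glazman's lemma gives
\[
	\cN((0,\lm);\Op_\eps) = \sup\big\{\dim M : M \subset \dom\frm_\eps,\ \frm_\eps[u] - \lm\|u\|^2 < 0\ \forall u\in M\setminus\{0\} \big\}.
\]
Expanding $u(0,y) = \sum_n c_n \chi_n(y)$ with $c_n := u_n(0)$ and applying the recurrence~\eqref{eq:reccurence}, the interface term evaluates to $2\eps \sum_n\sqrt{n+1}\,c_n c_{n+1}$. For fixed $\{c_n\}$ the channel form $\int_\dR(|u_n'|^2 + k_n^2|u_n|^2)\,\dd x$ with $k_n := (\nu_n-\lm)^{1/2}$ is minimised by $u_n^{\min}(x) := c_n e^{-k_n|x|}$, with minimum $2k_n|c_n|^2$. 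Hence
\[
	\inf\big\{\frm_\eps[u] - \lm\|u\|^2 : u_n(0) = c_n\big\} = 2\sum_n k_n|c_n|^2 + 2\eps\sum_n\sqrt{n+1}\,c_n c_{n+1}.
\]

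\emph{Step 2: Identification with $\sfJ(\lm)$.} Substituting $d_n := k_n^{1/2} c_n$ and using the identity $b_{n+1}(\lm) = \sqrt{n+1}/(2 k_n^{1/2} k_{n+1}^{1/2})$, the boundary-reduced form equals $2\,((I + \eps\sfJ(\lm)) d, d)$, where $(\cdot,\cdot)$ is the inner product of $\ell^2(\dN_0)$. Therefore $\cN((0,\lm);\Op_\eps)$ equals the maximal dimension of a subspace $N \subset \ell^2(\dN_0)$ on which $((I+\eps\sfJ(\lm))d,d) < 0$, and by the spectral theorem applied to the bounded self-adjoint operator $\sfJ(\lm)$ this equals $\cN((-\infty, -1/\eps); \sfJ(\lm))$. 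Finally, the unitary $\{\xi_n\} \mapsto \{(-1)^n\xi_n\}$ anticommutes with $\sfJ(\lm)$ (which has zero diagonal), so $\s(\sfJ(\lm))$ is symmetric about zero and $\cN((-\infty,-1/\eps);\sfJ(\lm)) = \cN((1/\eps,+\infty);\sfJ(\lm))$, yielding~\eqref{eq:BS}.

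\emph{Main obstacle.} The technical core is the rigorous partial minimisation: one must verify that the map $\{c_n\}\mapsto\{u_n^{\min}\}$ sends the negative spectral subspace of $I+\eps\sfJ(\lm)$ bijectively onto a subspace of $\dom\frm_\eps$ of the same dimension realising the strict variational inequality, with no loss of dimension on either side. The geometric decay of candidate eigenvector sequences --- extracted from the limiting recurrence $d_{n+1} + (2/\eps)d_n + d_{n-1} = 0$ whose dominant root $r_+ = (-1+\sqrt{1-\eps^2})/\eps$ satisfies $|r_+|<1$ precisely when $\eps<1$ --- ensures that the resulting $u^{\min} \sim \{u_n^{\min}\}$ lies in $\dom\frm_\eps$, and in particular that the trace condition $|y|^{1/2}u|_\Sigma \in L^2(\dR)$ is satisfied. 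Thus the subcritical assumption $\eps \in (0,1)$ is precisely what makes the reduction valid.
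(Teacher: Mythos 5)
The paper does not actually prove this statement --- it is imported wholesale from Solomyak \cite[Thm.~3.1]{S04FAA}; the only kindred argument inside the paper is the resonance counterpart, Theorem~\ref{thm:BS_resonance}, which the authors establish by a quite different, resolvent-based route (Krein's formula~\eqref{eq:Krein} plus the analytic Fredholm theorem). Your variational Birman--Schwinger reduction is, in substance, the argument of the cited source, and its skeleton is sound: positivity of the decoupled form for $\lm<\nu_0$, channel-wise minimisation with minimiser $c_ne^{-k_n|x|}$ and minimum $2k_n|c_n|^2$, the rescaling $d_n=k_n^{1/2}c_n$ turning the boundary form into $2\big((\sfI+\eps\sfJ(\lm))d,d\big)$, Glazman's lemma in both directions, and the symmetry of $\s(\sfJ(\lm))$ about the origin converting $\cN((-\infty,-1/\eps);\sfJ(\lm))$ into $\cN((1/\eps,+\infty);\sfJ(\lm))$. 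Compared with the paper's resolvent machinery, your route is more elementary but is confined to real $\lm$ below the essential spectrum, which is exactly the scope of~\eqref{eq:BS}.

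Two points in your ``main obstacle'' paragraph are aimed at the right target but give the wrong reason. First, admissibility of the trial functions does not rest on geometric decay of eigenvector sequences: the lower-bound direction only requires that \emph{every} $d\in\ell^2(\dN_0)$ in a negative subspace of $\sfI+\eps\sfJ(\lm)$ produce an admissible $u^{\min}\sim\{k_n^{-1/2}d_ne^{-k_n|x|}\}$, and this holds because $\sum_nk_n|c_n|^2=\|d\|^2<\infty$ already gives $u^{\min}\in H^1(\dR^2)$ with $yu^{\min}\in L^2(\dR^2)$, while the trace condition follows from the Heinz inequality $|y|\le(-\p_y^2+y^2)^{1/2}$ in the form sense, which yields $\int_\dR|y|\,|u(0,y)|^2\dd y\le\sum_n(2\nu_n)^{1/2}|c_n|^2\le C\|d\|^2$. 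Second, the hypothesis $\eps<1$ does not make the reduction valid; its role is to place $-1/\eps$ outside $\sess(\sfJ(\lm))=[-1,1]$ so that both sides of~\eqref{eq:BS} are finite. Finally, for the upper bound you should state explicitly that $u\mapsto\{k_n^{1/2}u_n(0)\}$ maps $\dom\frm_\eps$ into $\ell^2(\dN_0)$ (since $2k_n|u_n(0)|^2\le\int_\dR(|u_n'|^2+k_n^2|u_n|^2)\dd x$) and is injective on any negative subspace of $\frm_\eps-\lm$, because vanishing of all $u_n(0)$ kills the interface term and leaves a form bounded below by $(\nu_0-\lm)\|u\|^2>0$. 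With these repairs the argument is complete.
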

\begin{remark}\label{rem:BS_modified}
	A careful inspection of the proof of~\cite[Thm 3.1]{S04FAA} 
	yields that Theorem~\ref{thm:BS}
	can also be modified, replacing~\eqref{eq:BS} by
	\begin{equation}\label{eq:BS_modified}
		\cN((0,\lm]; \Op_\eps) 
		=
		\cN( [1/\eps,+\infty);\sfJ(\lm) ).	
	\end{equation} 
	In other words, the right endpoint of the interval $(0,\lm)$
	and the left endpoint of the interval $(1/\eps,+\infty)$
	can be simultaneously included.
	%into the formula~\eqref{eq:BS}.
\end{remark}
The following consequence of Theorem~\ref{thm:BS} and of the above remark will
be useful further.
\begin{cor}\label{cor:BS}
	Let the assumptions be as in Theorem~\ref{thm:BS}. Then the following claims hold:
	\begin{myenum}
		\item $\eps\mapsto\lm_k(\Op_\eps)$ are continuous non-increasing functions;
		\item 
		$\dim\ker(\Op_\eps - \lm) 
		= 
		\dim\ker\big(\sfI + \eps\sfJ(\lm)\big)$ for all $\lm\in (0,\nu_0)$.
		In particular, since the eigenvalues of $\sfJ(\lm)$ are simple, the eigenvalues
		of $\Op_\eps$ are simple as well.
	\end{myenum}	
\end{cor}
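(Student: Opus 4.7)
The plan is to deduce both items from the Birman-Schwinger identities~\eqref{eq:BS} and~\eqref{eq:BS_modified}, combined with a sign-flip symmetry of $\sfJ(\lm)$ implemented by the diagonal unitary $U := \diag(\{(-1)^n\})$ on $\ell^2(\dN_0)$. Because $\sfJ(\lm)$ has only sub- and super-diagonal entries, one has $U\sfJ(\lm)U^{-1}=-\sfJ(\lm)$, so $\s(\sfJ(\lm))$ is symmetric about zero and $\dim\ker(\sfJ(\lm)-\mu\sfI)=\dim\ker(\sfJ(\lm)+\mu\sfI)$ for every $\mu\in\dR$.

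For item (ii), fix $\lm\in(0,\nu_0)$. Since $\sess(\Op_\eps)=[\nu_0,+\infty)$ by Proposition~\ref{prop:known}\,(i), the multiplicity of $\lm$ as an eigenvalue satisfies $\dim\ker(\Op_\eps-\lm)=\cN((0,\lm];\Op_\eps)-\cN((0,\lm);\Op_\eps)$. Subtracting~\eqref{eq:BS} from~\eqref{eq:BS_modified} rewrites this as $\cN([1/\eps,+\infty);\sfJ(\lm))-\cN((1/\eps,+\infty);\sfJ(\lm))=\dim\ker(\sfJ(\lm)-(1/\eps)\sfI)$, and the symmetry converts it into $\dim\ker(\sfI+\eps\sfJ(\lm))$. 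Simplicity of the eigenvalues of $\Op_\eps$ then follows from the simplicity of the eigenvalues of $\sfJ(\lm)$ in $(1,+\infty)$ recorded before Theorem~\ref{thm:BS}.

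For item (i), monotonicity is immediate from~\eqref{eq:BS}: if $0<\eps_1<\eps_2<1$, then $(1/\eps_2,+\infty)\supset(1/\eps_1,+\infty)$, so $\cN((0,\lm);\Op_{\eps_2})\ge\cN((0,\lm);\Op_{\eps_1})$ for every $\lm\in(0,\nu_0)$, and the variational definition of $\lm_k$ gives $\lm_k(\Op_{\eps_2})\le\lm_k(\Op_{\eps_1})$. For continuity I argue by contradiction: a jump at some $\eps_0\in(0,1)$ with one-sided limits $\lm_-:=\lim_{\eps\arr\eps_0-}\lm_k(\Op_\eps)>\lim_{\eps\arr\eps_0+}\lm_k(\Op_\eps)=:\lm_+$ would, via~\eqref{eq:BS} and~\eqref{eq:BS_modified} applied on both sides of $\eps_0$, yield $\cN((1/\eps_0,+\infty);\sfJ(\lm))<k\le\cN([1/\eps_0,+\infty);\sfJ(\lm))$, hence $1/\eps_0\in\s(\sfJ(\lm))$, for every $\lm\in(\lm_+,\lm_-)$. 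Real-analyticity of $b_n(\lm)$ on $(0,\nu_0)$ makes $\lm\mapsto\sfJ(\lm)$ a real-analytic family of bounded self-adjoint operators, and Kato's analytic perturbation theorem applied to the simple eigenvalues of $\sfJ(\lm)$ in $(1,+\infty)$ then forces the branch attaining $1/\eps_0$ to equal $1/\eps_0$ identically on $(0,\nu_0)$; by the already-proved item (ii), every $\lm\in(0,\nu_0)$ would then be an eigenvalue of $\Op_{\eps_0}$, in contradiction with $\cN_{1/2}(\Op_{\eps_0})<\infty$ from Proposition~\ref{prop:known}\,(iii).

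The main technical point is this final analytic-continuation argument: items (ii) and the monotonicity half of (i) fall out by straightforward bookkeeping, whereas ruling out jumps in $\eps\mapsto\lm_k(\Op_\eps)$ requires combining real-analyticity of $\sfJ(\lm)$ with the simplicity of its eigenvalues above $1$ to guarantee genuinely analytic (and therefore non-trivially $\lm$-dependent) eigenvalue branches.
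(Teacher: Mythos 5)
Your item (ii) and the monotonicity part of item (i) coincide with the paper's own argument: the paper likewise computes $\dim\ker(\Op_\eps-\lm)$ as the difference of \eqref{eq:BS_modified} and \eqref{eq:BS} and then uses the symmetry of $\s(\sfJ(\lm))$ about the origin; your conjugation by $\diag(\{(-1)^n\})$ is a clean justification of that symmetry, which the paper only asserts. The genuinely different step is the continuity of $\eps\mapsto\lm_k(\Op_\eps)$. The paper derives it from Kato's perturbation theory for quadratic forms: the coupling term in $\frm_\eps$ is relatively form-bounded with bound less than one, so the eigenvalues depend continuously on $\eps$ by \cite[Thms. VI.3.6, VIII.1.14]{Kato}. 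Your argument instead stays entirely inside the Birman--Schwinger framework and rules out a jump by contradiction; it is correct, but two points deserve to be made explicit. First, passing from the inequalities at parameters $\eps$ slightly above and below $\eps_0$ to the pair $\cN((1/\eps_0,+\infty);\sfJ(\lm))<k\le\cN([1/\eps_0,+\infty);\sfJ(\lm))$ at $\eps_0$ itself requires the fact (recorded before Theorem~\ref{thm:BS}) that the eigenvalues of $\sfJ(\lm)$ in $(1,+\infty)$ are discrete and can accumulate only at $1$, so that the relevant counting functions converge as $\eps\to\eps_0\pm$. Second, the excursion through real-analyticity of $\lm\mapsto\sfJ(\lm)$ and eigenvalue branches is superfluous: once $1/\eps_0\in\sd(\sfJ(\lm))$ for every $\lm$ in the nonempty interval $(\lm_+,\lm_-)$ (minus at most the point $\lm_k(\Op_{\eps_0})$), your item (ii) already makes every such $\lm$ an eigenvalue of $\Op_{\eps_0}$, which contradicts $\cN_{1/2}(\Op_{\eps_0})<\infty$ without any analytic continuation; this also spares you the delicate bookkeeping of whether a single analytic branch carries the eigenvalue $1/\eps_0$ for all the $\lm$ in question. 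The trade-off: your route is self-contained within the discrete-spectrum machinery already set up, while the paper's form-perturbation argument is shorter and does not need the finiteness of $\cN_{1/2}(\Op_{\eps_0})$ as an input.
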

\begin{proof}
	{\rm (i)} 
	Let $\eps_1\in (0,1)$.
	For $\lm = \lm_k(\Op_{\eps_1})$, $k\in\N$, we have by Theorem~\ref{thm:BS} 
	and Remark~\ref{rem:BS_modified}
	\[
		\cN([1/\eps_1,+\infty);\sfJ(\lm)) = \cN((0,\lm];\Op_{\eps_1}) \ge k.
	\]
	Hence, for any $\eps_2 \in (\eps_1,1)$ we obtain 
	\[
		\cN((0,\lm];\Op_{\eps_2})  
		= 
		\cN([1/\eps_2,+\infty);\sfJ(\lm)) 
		\ge 
		\cN([1/\eps_1,+\infty);\sfJ(\lm)) 
		\ge k.
	\]
	Therefore, we get $\lm_k(\Op_{\eps_2}) \le \lm = \lm_k(\Op_{\eps_1})$.
	
	Recall that $\Op_\eps$ represents the quadratic form $\frm_\eps$
	defined in~\eqref{eq:frm}. Continuity of the eigenvalues follows 
	from~\cite[Thms. VI.3.6, VIII.1.14]{Kato}
	and from the fact that the quadratic form 
	\[
		\dom \frm_\eps\ni u \mapsto 
		\eps\sqrt{2}\lbra(\sign y|y|^{1/2}u|_\Sigma, |y|^{1/2}u|_\Sigma\rbra)_\R,
		\qquad \eps\in (0,1),
	\]
	is relatively bounded with respect to
	\[
		\dom \frm_\eps\ni u \mapsto 
		\|\p_x u\|^2_{\R^2} 
		+ 
		\frac12\|\p_y u\|^2_{\R^2} + \frac12 (y u, y u)_{\R^2}
	\]
	with a bound less than one; \cf \cite[Lem. 2.1]{S04FAA}.

	\vspace{1.0ex}
	
	\noindent (ii)
	By Theorem~\ref{thm:BS}, Remark~\ref{rem:BS_modified},
	and using symmetry of $\s(\sfJ(\lm))$ with respect to the origin we get
	\[
	\begin{split}
		\dim\ker(\Op_\eps - \lm) 
		& = \cN((0,\lm]; \Op_\eps) - \cN((0,\lm); \Op_\eps)\\
		& =
		\cN([1/\eps,+\infty); \sfJ(\lm)) - \cN((1/\eps,+\infty); \sfJ(\lm)) 
		=				
		\dim\ker\big(\sfI + \eps\sfJ(\lm)\big). \qedhere
	\end{split}	
	\]         
\end{proof}
For resonances of $\Op_\eps$ one can derive a Birman-Schwinger-type condition  
analogous to the one in Corollary~\ref{cor:BS}\,(ii).
For the sheet $Z_E\subset \wt Z$ of the Riemann surface $\wh Z$ 
we define the Jacobi matrix
\begin{equation}\label{eq:Jacobi2}
	\sfJ_E(\lm) := \sfJ(\{0\},\{b_{n}^E(\lm)\}),\qquad \lm \in \dC\setminus [\nu_0,+\infty),
\end{equation}
where 
\begin{equation}\label{eq:bnE}
	b_n^E(\lm) := 
	\frac{1}{2}
	\bigg(\frac{n}{r_n(\lm,l_n^E) r_{n-1}(\lm, l_{n-1}^E)}\bigg)^{1/2},
	\qquad n\in\dN.
\end{equation}
The Jacobi matrix $\sfJ_E(\lm)$ in~\eqref{eq:Jacobi2}
is closed, bounded, and everywhere defined in 
$\ell^2(\N_0)$, but in general non-selfadjoint. 
For $E = \varnothing$ and $\lm \in (0,\nu_0)$ 
the Jacobi matrix $\sfJ_\varnothing(\lm)$ coincides with $\sfJ(\lm)$ 
in~\eqref{eq:Jacobi}. In what follows
it is also convenient to set $b_0^E(\lm) = 0$.  
In the next theorem we characterise resonances of $\Op_\eps$ 
lying on the sheet $Z_E$.
\begin{thm}\label{thm:BS_resonance}
	Let the self-adjoint operator $\Op_\eps$, $\eps \in (0,1)$, 
	be as in~\eqref{eq:operator}. Let the sheet $Z_E\subset 
	\wt Z$ be fixed, let $\cR_E(\eps)$ be as in 
	Definition~\ref{def:resonances} and	the associated
	operator-valued function $\sfJ_E(\lm)$ 
	be as in~\eqref{eq:Jacobi2}.
	Then the following equivalence holds
	\begin{equation}\label{eq:res_cond}
		\lm \in \cR_E(\eps)
		\qquad\Longleftrightarrow\qquad 
		\ker(\sfI + \eps\sfJ_E(\lm)) \ne \{0\}.
	\end{equation}
\end{thm}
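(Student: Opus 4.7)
My plan is to derive a Krein-type resolvent formula for $\Op_\eps$ on the physical sheet in which the only nontrivially $\lm$-dependent inverse is $(\sfI + \eps \sfJ_\varnothing(\lm))^{-1}$, continue it sheet by sheet to every $Z_E \subset \wt Z$, and then read off the pole condition. In the Hermite decomposition the unperturbed operator $\Op_0$ (the $\eps = 0$ case of~\eqref{eq:operator}) acts diagonally, with $n$-th channel resolvent kernel $G_n(x, x'; \lm) = (2 r_n(\lm))^{-1} e^{-r_n(\lm) |x-x'|}$. The boundary condition (iii) in the definition of $\cD_\eps$ encodes the $\delta$-coupling as a factorized rank-infinite perturbation built from the trace $\{u_n\} \mapsto \{u_n(0)\}$ and the tridiagonal coefficients dictated by the recurrence~\eqref{eq:reccurence}. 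A direct computation, equivalent to the one underlying Theorem~\ref{thm:BS}, then yields for $\lm$ in the resolvent set on $Z_\varnothing$ an identity of the form
\begin{equation*}
  (\Op_\eps - \lm)^{-1} = (\Op_0 - \lm)^{-1} - \eps\, \Gamma_\varnothing(\lm)^*\, \bigl(\sfI + \eps \sfJ_\varnothing(\lm)\bigr)^{-1}\, \Gamma_\varnothing(\lm),
\end{equation*}
where $\Gamma_\varnothing(\lm) : \cH \to \ell^2(\dN_0)$ is a bounded trace-type operator whose components carry the weights $(2 r_n(\lm))^{-1/2}$; the coefficients of the associated sandwich reproduce precisely the $b_n(\lm)$ of~\eqref{eq:bn}.

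Next I would continue this formula along a shortest path in $\wh Z$ from $Z_\varnothing$ to a given $Z_E$. Because every $E \in \cE$ satisfies $l_n^E = 0$ for all but finitely many $n$, only finitely many square roots $r_n$ flip branch along the path, so the continued operators remain bounded, $\sfJ_E(\lm) - \sfJ_0$ remains compact, and the right-hand side of the Krein formula continues to a meromorphic operator-valued function on $Z_E$. By uniqueness of the meromorphic continuation asserted in Proposition~\ref{prop:cont}, this yields
\begin{equation*}
  \frr_{n,\eps}^E(\lm; u) = T_E(\lm; u, n) - \eps \bigl(\, (\sfI + \eps \sfJ_E(\lm))^{-1} \Gamma_E(\lm)(u \otimes \sfe_n),\, \Gamma_E(\lm)(u \otimes \sfe_n) \,\bigr)
\end{equation*}
for all $\lm \in Z_E$ off the poles, where $T_E(\lm; u, n)$ denotes the continuation of $\langle (\Op_0 - \lm)^{-1} (u \otimes \sfe_n), u \otimes \sfe_n \rangle$ to $Z_E$; this continuation is holomorphic on $Z_E$ since all the $r_n$'s have nonvanishing real parts away from the thresholds and the series converges uniformly on compacts.

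Finally I would invoke analytic Fredholm theory for the pencil $\sfI + \eps \sfJ_E(\lm)$: it is the identity plus a compact operator depending holomorphically on $\lm \in Z_E$, so its inverse is meromorphic on $Z_E$ with poles exactly at those $\lm$ for which $\ker(\sfI + \eps \sfJ_E(\lm)) \ne \{0\}$. Since $T_E(\cdot; u, n)$ is holomorphic on $Z_E$, every pole of $\frr_{n,\eps}^E(\cdot; u)$ must come from the second summand, which gives ``$\Rightarrow$'' in~\eqref{eq:res_cond}; for ``$\Leftarrow$'', at any such $\lm_0$ one chooses $u \in L^2(\dR)$ and $n \in \dN_0$ so that $\Gamma_E(\lm_0)(u \otimes \sfe_n)$ has a non-vanishing component along an eigenvector of $\sfI + \eps \sfJ_E(\lm_0)$, forcing the residue to survive.

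The main obstacle will be the non-triviality assertion in this last step: verifying that the continued trace $\Gamma_E(\lm_0)$ is sufficiently non-degenerate for its range to contain an eigenvector of $\eps \sfJ_E(\lm_0)$ associated with the eigenvalue $-1$. This reduces to a totality statement about $\{\Gamma_E(\lm_0)(u \otimes \sfe_n) : u \in L^2(\dR),\, n \in \dN_0\}$ in $\ell^2(\dN_0)$, made slightly delicate by the weights $(r_n(\lm, l_n^E))^{-1/2}$ that can be large near the thresholds $\nu_n$; the argument will exploit the explicit $e^{-r_n(\lm)|x|}$ structure of the components of $\Gamma_E$ together with the fact that the Hermite basis $\{\sfe_n\}$ and variation of $u \in L^2(\dR)$ together span a dense subspace.
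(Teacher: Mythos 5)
Your plan follows essentially the same route as the paper's Appendix~\ref{app:A}: Solomyak's Krein-type resolvent formula reducing the $\lm$-dependence to $(\sfI+\eps\sfJ_\varnothing(\lm))^{-1}$, iterative meromorphic continuation across adjacent sheets of $\wt Z$, and the analytic Fredholm theorem identifying the poles with the points where $\ker(\sfI+\eps\sfJ_E(\lm))\ne\{0\}$. The non-degeneracy obstacle you flag at the end is resolved by the tridiagonal structure of $\sfJ_E(\lm)$: since the off-diagonal entries $b_n^E(\lm)$ are nonzero for $n\ge 1$, any vector in $\ker(\sfI+\eps\sfJ_E(\lm_0))$ has nonvanishing zeroth component, so already the diagonal matrix element with $n=0$ detects the pole --- a point the paper itself leaves implicit.
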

For $E = \varnothing$ the claim of Theorem~\ref{thm:BS_resonance} follows from Corollary~\ref{cor:BS}\,(ii).
The proof of the remaining part of Theorem~\ref{thm:BS_resonance} is postponed until
Appendix~\ref{app:A}. The argument essentially relies on Krein-type resolvent formula~\cite{S06} for $\Op_\eps$ and on the 
analytic Fredholm theorem~\cite[Thm. 3.4.2]{S2A}.
\begin{remark}\label{rem:symmetric}
	Thanks to compactness of the difference $\sfJ_E(\lm) - \sfJ_0$
	we get by~\cite[Lem. XIII.4.3]{RS-IV} 
	%(see also \cite[Prop. 5.4.4, Thm. 5.4.6]{KS15}) 
	that $\sess(\eps\sfJ_E(\lm)) = \sess(\eps\sfJ_0)) = [-\eps,\eps]$.
	Therefore, the equivalence~\eqref{eq:res_cond}
	can be rewritten as
	\[
		\lm \in \cR_E(\eps)
		\qquad\Longleftrightarrow\qquad 
		-1\in\sd(\eps\sfJ_E(\lm)).
	\]
	Identity $\sfJ_E(\lm)^* = \sfJ_E(\ov\lm)$ 
	combined with~\cite[Rem. III.6.23]{Kato} 
	and with Theorem~\ref{thm:BS_resonance} 
	yields that the set	$\cR_E(\eps)$ is symmetric
	with respect to the real axis.
%	Thus, it suffices to characterise the resonances
%	on $Z_E^-$. 
\end{remark}

%Finally, we remark that in the present paper we do not attempt to define and characterise the resonances of $\Op_\eps$ lying on the sheets outside of $\wt Z$.

%
%*********************************************************************
\section{Localization of bound states and resonances}\label{sec:free}
%*********************************************************************
In this section we prove Theorem~\ref{thm:discr}\,(i) and
Theorem~\ref{thm:res}\,(i). The idea of the proof
is to estimate the norm of $\sfJ_E(\lm)$ and to apply
Corollary~\ref{cor:BS}\,(ii) and Theorem~\ref{thm:BS_resonance}.
\begin{proof}[Proof of Theorem~\ref{thm:discr}\,(i) and Theorem~\ref{thm:res}\,(i)]
	The square of the norm of the operator $\sfJ_E(\lm)$ in~\eqref{eq:Jacobi2} 	
	can be estimated from above by
	\begin{equation}\label{eq:estimate}
	\begin{split}
		\lbra\|\sfJ_E(\lm)\rbra\|^2&  \le 
		\sup_{\xi\in\ell^2(\dN_0), \|\xi\| = 1}
		\|\sfJ_E(\lm)\xi\|^2\\
		& \le
		\sup_{\xi\in\ell^2(\dN_0), \|\xi\| = 1}
		\bigg(
		\sum_{n\in \dN_0}|b_{n}^E(\lm) \xi_{n-1} + b_{n+1}^E(\lm)\xi_{n+1}|^2\bigg)\\
		& \le 
		\sup_{\xi\in\ell^2(\dN_0), \|\xi\| = 1}
		\bigg(
		2\sum_{n\in \dN_0}
		\big(|b_{n}^E(\lm)|^2|\xi_{n-1}|^2 + 
		|b_{n+1}^E(\lm)|^2|\xi_{n+1}|^2\big)\bigg)
		\\
		&\le
		4\sup_{n\in\dN_0}
		|b_n^E(\lm)|^2
		\sup_{\xi\in\ell^2(\dN_0), \|\xi\| = 1}
		\|\xi\|^2 = 4\sup_{n\in\dN}|b_n^E(\lm)|^2,
	\end{split}
	\end{equation}
	where $b_n^E(\lm)$, $n\in\dN_0$, are defined as in~\eqref{eq:bnE}.

	If $\|\eps\sfJ_E(\lm)\| < 1$ holds for a point
	$\lm\in \dC_-$
	then the condition $\ker(\sfI + \eps\sfJ_E(\lm)) \ne \{0\}$ is not satisfied.
	Thus, $\lm$ cannot by Theorem~\ref{thm:BS_resonance}
	be a resonance of $\Op_\eps$ lying on $Z_E^-$
	in the sense of Definition~\ref{def:resonances}.
	In view of estimate~\eqref{eq:estimate} 
	and of~\eqref{eq:bnE}
	to fulfil $\|\eps\sfJ_E(\lm)\| < 1$ it suffices to satisfy
	\[
	 	\frac{n}{|\nu_{n-1} - \lm|^{1/2} |\nu_n - \lm|^{1/2}} < \frac{1}{\eps^2},
	 	\qquad \forall~n\in\dN,
	\]
	or, equivalently, 
	\[
		|\nu_n - \lm| \cdot |\nu_{n-1} - \lm| > \eps^4 n^2,\qquad \forall~n\in\dN.
	\]
	Thus, the claim of Theorem~\ref{thm:res}\,(i) is proven. 
	
	If $\|\eps\sfJ_\varnothing(\lm)\| < 1$ holds for a point
	$\lm\in (0,1/2)$
	then the condition $\ker(\sfI + \eps\sfJ_\varnothing(\lm)) \ne \{0\}$ 
	is not satisfied.
	Thus, by Corollary~\ref{cor:BS}\,(ii),
	$\lm$ is not an eigenvalue of $\Op_\eps$. In view 
	of~\eqref{eq:estimate} and~\eqref{eq:bnE} to fulfil 
	$\|\eps\sfJ_\varnothing(\lm)\| < 1$ it suffices
	to satisfy 
	\begin{equation}\label{eq:quadratic1}
		\big(\nu_{n-1} - \lm\big)\big(\nu_n - \lm\big)
		 = \lm^2 - 2n\lm + n^2-1/4 >
		n^2\eps^4,\qquad \forall~n\in\dN.
	\end{equation}
	The roots of the equation
	$\lm^2 - 2n \lm +  n^2 - 1/4 - n^2\eps^4 = 0$ are given by
	$\lm^\pm_n(\eps) = n \pm \sqrt{1/4 + n^2\eps^4}$.
	Since $\lm^+_n(\eps) > 1/2$ for all $n\in\dN$, 
	the condition~\eqref{eq:quadratic1} yields 
	$\lm_1(\Op_\eps) \ge \min_{n\in\dN} \lm^-_n(\eps)$.
	For $n\in\dN$ we have
	\[
%	\begin{split}
		\lm^-_{n+1}(\eps) - \lm^-_n(\eps)
		 = 
		1 - \frac{(2n+1)\eps^4}{\big(\frac14+ n^2\eps^4\big)^{1/2} +
		\big(\frac14 + (n+1)^2\eps^4\big)^{1/2}}
		\ge
		1 - \frac{(2n+1)\eps^4}{(2n+1)\eps^2}
		= 1 - \eps^2 > 0.
%	\end{split}	
	\]
	Hence, $\min_{n\in\dN} \lm^-_n(\eps) = \lm^-_1(\eps)$
	and the claim of Theorem~\ref{thm:discr}\,(i) follows.
\end{proof}
%

%--------------------------------------------------------------
\section{The weak coupling regime: $\eps\arr 0+$}
%--------------------------------------------------------------
\label{sec:weak}
In this section we prove Theorem~\ref{thm:discr}\,(ii) and Theorem~\ref{thm:res}\,(ii)-(iii). Intermediate results of this section given in
Lemmata~\ref{lem:aux1} and~\ref{lem:aux2}
are of an independent interest.
% as will be explained below.

First, we introduce some auxiliary operators and functions.
Let $n\in\dN_0$ and the sheet $Z_E \subset \wt Z$ be fixed. 
We make use of notation $\sfP_{kl} := \sfe_{n+k-2}(\cdot,\sfe_{n+l-2})$ with $k,l\in\{1,2,3\}$. Note that  for $n= 0 $ we have $\sfP_{k1} = \sfP_{1k} = 0$ for $k=1,2,3$.
It will also be convenient to decompose the Jacobi matrix $\sfJ_E(\lm)$ in~\eqref{eq:Jacobi2} as
\begin{equation}\label{eq:decomposition0}
	\sfJ_E(\lm)  = \sfS_{n,E}(\lm) + \sfT_{n,E}(\lm), 
\end{equation}	
where the operator-valued functions $\lm\mapsto \sfT_{n,E}(\lm), \sfS_{n,E}(\lm)$ are defined by
\begin{equation}\label{eq:ST_E}
%\begin{split}
	\sfT_{n,E}(\lm) 
	:= 
	b_{n+1}^E(\lm)\lbra[\sfP_{23} + \sfP_{32} \rbra]
	+
	b_n^E(\lm)\lbra[\sfP_{21} + \sfP_{12}\rbra],\qquad
	\sfS_{n,E}(\lm) := \sfJ_E(\lm)  - \sfT_{n,E}(\lm).
%\end{split}
\end{equation}
Clearly, the operator-valued function $\sfS_{n,E}(\cdot)$ is uniformly bounded
on $\dD_{1/2}(\nu_n)$. Moreover, for sufficiently small $r = r(n) \in (0,1/2)$ 
the bounded operator $\sfI + \eps \sfS_{n,E}(\lm)$ is 
at the same time boundedly invertible for all $(\eps,\lm) \in \Omega_r(n) := \dD_r\times \dD_r(\nu_n)$.
Thus, the operator-valued function
\begin{equation}\label{eq:R2}
	\sfR_{n,E}(\eps,\lm) := 
	\big(\sfI + \eps\sfS_{n,E}(\lm)\big)^{-1},
\end{equation}
is well-defined and analytic on $\Omega_r(n)$ and, in particular, $\sfR_{n,E}(0,\nu_n) = \sfI$.
Furthermore, we introduce auxiliary scalar functions
$\Omega_r(n)\ni (\eps,\lm)\mapsto f_{kl}^E(\eps,\lm)$ by 
\begin{equation}\label{eq:Fnm_E}
	f_{kl}^E(\eps,\lm) 
	:= \big(\sfR_{n,E}(\eps,\lm)\sfe_{n+k-2},\sfe_{n+l-2}\big),
	\qquad k,l\in\{1,2,3\}.
\end{equation} 
Thanks to $\sfR_{n,E}(0,\nu_n) = \sfI$ we have $f_{kl}^E(0,\nu_n) = \delta_{kl}$.
Finally, we introduce $3\times 3$ matrix-valued function
\begin{equation}\label{eq:Amatrix}
	\dD_r\times \dD_r^\times(\nu_n)\ni (\eps,\lm)\mapsto\sfA_{n,E}(\eps,\lm) :=  \big(a_{kl}^E(\eps,\lm)\big)_{k,l = 1}^{3,3}
\end{equation}  
with the entries given for $k,l=1,2,3$ by
\begin{equation}\label{eq:akl}
%\begin{split}
	a_{kl}^E(\eps,\lm) 
	:=
	b_n^E(\lm)\big(f_{1k}^E(\eps,\lm)\delta_{2l} + f_{2k}^E(\eps,\lm)\delta_{1l}\big)
	+
	b_{n+1}^E(\lm)
	\big(f_{2k}^E(\eps,\lm)\delta_{3l} + f_{3k}^E(\eps,\lm)\delta_{2l}\big).
%\end{split}	
\end{equation}
We remark that $\rank\sfA_{n,E}(\eps,\lm) \le 2$ due to linear
dependence between the first and the third columns in $\sfA_{n,E}(\eps,\lm)$.

In the first lemma we derive an implicit scalar equation which characterises those points $\lm \in\dC\setminus [\nu_0,+\infty)$ near $\nu_n$ for which the condition $\ker(\sfI+ \eps\sfJ_E(\lm))\ne \{0\}$
is satisfied under additional assumption that $\eps > 0$ is small enough. This equation can be used
to characterise the `true' resonances for $\Op_\eps$ as well as the weakly coupled bound state if $n = 0$ and $E = \varnothing$.
\begin{lem}\label{lem:aux1}
	Let the self-adjoint operator $\Op_\eps$, $\eps \in (0,1)$, 
	be as in~\eqref{eq:operator}.
	Let $n\in\dN_0$ and the sheet $Z_E\subset \wt Z$ be fixed.
	Let $r = r(n) > 0$ be chosen as above.
	Then for all $\eps \in (0,r)$ 
	a point $\lm \in \dD_r(\nu_n) \setminus [\nu_0,\infty)$ is a resonance of $\Op_\eps$ 
	on $Z_E$ \iff
	\[
		\det\big(\sfI + \eps\sfA_{n,E}(\eps,\lm)\big) = 0.
	\]
\end{lem}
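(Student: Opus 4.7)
The plan is to combine the Birman--Schwinger characterization of resonances from Theorem~\ref{thm:BS_resonance} with the decomposition~\eqref{eq:decomposition0} to reduce the infinite-dimensional kernel condition to a $3\times 3$ determinantal equation, exploiting the fact that the ``singular part'' $\sfT_{n,E}(\lm)$ has range inside $\mathrm{span}\{\sfe_{n-1},\sfe_n,\sfe_{n+1}\}$ (with the convention $\sfe_{-1}=0$).

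For $(\eps,\lm)\in\Omega_r(n)$ the operator $\sfI+\eps\sfS_{n,E}(\lm)$ is boundedly invertible with inverse $\sfR_{n,E}(\eps,\lm)$, so first I would factor
\[
\sfI+\eps\sfJ_E(\lm) \;=\; \bigl(\sfI+\eps\sfS_{n,E}(\lm)\bigr)\bigl(\sfI+\eps\sfR_{n,E}(\eps,\lm)\sfT_{n,E}(\lm)\bigr),
\]
and conclude via Theorem~\ref{thm:BS_resonance} that $\lm\in\cR_E(\eps)$ iff $\ker\bigl(\sfI+\eps\sfR_{n,E}(\eps,\lm)\sfT_{n,E}(\lm)\bigr)\ne\{0\}$.

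Next I would write $\sfT_{n,E}(\lm) = P^\ast M(\lm)\,P$, where $P\colon\ell^2(\dN_0)\to\dC^3$ is the coordinate projection onto indices $\{n-1,n,n+1\}$ and $M(\lm)$ is the $3\times 3$ tri-diagonal matrix with zero diagonal and symmetric off-diagonal entries $b_n^E(\lm)$ (positions $(1,2),(2,1)$) and $b_{n+1}^E(\lm)$ (positions $(2,3),(3,2)$). A short rank-reduction argument then shows that the kernels of $\sfI+\eps\sfR_{n,E}P^\ast MP$ on $\ell^2(\dN_0)$ and of $\sfI_3+\eps(P\sfR_{n,E}P^\ast)M$ on $\dC^3$ are in bijection via $\xi\mapsto v=P\xi$, with inverse $v\mapsto -\eps\sfR_{n,E}P^\ast Mv$; nontriviality of $v$ follows from $P\xi=v$, and if $v=0$ then $\sfT_{n,E}\xi=0$ forces $\xi=0$ by the kernel equation. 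Consequently the condition becomes $\det\bigl(\sfI_3+\eps(P\sfR_{n,E}(\eps,\lm)P^\ast)M(\lm)\bigr)=0$.

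A direct column-wise computation using $(P\sfR_{n,E}(\eps,\lm)P^\ast)_{kl}=f_{lk}^E(\eps,\lm)$ from~\eqref{eq:Fnm_E} identifies $(P\sfR_{n,E}P^\ast)M$ with the matrix $\sfA_{n,E}(\eps,\lm)$ of~\eqref{eq:akl}, completing the proof. The main bookkeeping subtlety is the case $n=0$, where $\sfe_{-1}=0$ and $b_0^E\equiv 0$ force the first row and the first column of $\sfA_{0,E}$ to vanish; the $3\times 3$ determinant then degenerates to a $2\times 2$ one, but the rank reduction and the identification of matrix entries remain valid unchanged, so the stated equivalence holds uniformly in $n\in\dN_0$.
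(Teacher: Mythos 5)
Your proposal is correct and follows essentially the same route as the paper's proof: the same decomposition $\sfJ_E(\lm)=\sfS_{n,E}(\lm)+\sfT_{n,E}(\lm)$, the same factorization $\sfI+\eps\sfJ_E(\lm)=(\sfI+\eps\sfS_{n,E}(\lm))(\sfI+\eps\sfR_{n,E}(\eps,\lm)\sfT_{n,E}(\lm))$, and the same identification of the resulting $3\times 3$ matrix with $\sfA_{n,E}(\eps,\lm)$, including the correct treatment of the degenerate case $n=0$. The only deviation is in the middle step: where the paper passes through the infinite-dimensional determinant of a finite-rank perturbation (\cite[Thm. 3.5\,(b)]{S05} together with the cyclicity $\det(\sfI+AB)=\det(\sfI+BA)$ from~\cite{GK69}), you substitute an elementary and equally valid kernel bijection $\xi\mapsto P\xi$ between $\ker(\sfI+\eps\sfR_{n,E}\sfT_{n,E})$ and the kernel of the $3\times3$ matrix, which makes the argument self-contained at no extra cost.
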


\begin{proof}
	Using the decomposition~\eqref{eq:decomposition0} of 
	$\sfJ_E(\lm)$ and the auxiliary operator in~\eqref{eq:R2} we find
	\begin{equation}\label{eq:reduction}
	\begin{split}
		\dim\ker \lbra(\sfI +\eps\sfJ_E(\lm)\rbra) 
		& = 
		\dim\ker\lbra(\sfI + 
		\eps\sfS_{n,E}(\lm) + \eps\sfT_{n,E}(\lm)\rbra)\\
		& = 
		\dim\ker\lbra(\sfI + \eps\sfR_{n,E}(\eps,\lm)\sfT_{n,E}(\lm)\rbra).
	\end{split}				   	
	\end{equation}
	%
	%for all $(\eps,\lm) \in\Omega_r(n)\setminus \{\bfo\}$. 
	Note that 
	\[
		\rank(\sfR_{n,E}(\eps,\lm)\sfT_{n,E}(\lm)) \le \rank(\sfT_{n,E}(\lm)) \le 3 
	\]
	and, hence, using~\cite[Thm. 3.5\,(b)]{S05}, we get
	\begin{equation}\label{eq:kernel_det}
   		\dim\ker\lbra(\sfI +  \eps \sfR_{n,E}(\eps,\lm)\sfT_{n,E}(\lm)\rbra) \ge 1 
		\qquad \Longleftrightarrow\qquad 
		\det\lbra(\sfI + \eps \sfR_{n,E}(\eps,\lm)\sfT_{n,E}(\lm)\rbra) = 0.
	\end{equation}
	For the orthogonal projector 
	$\sfP := \sfP_{11} + \sfP_{22} + \sfP_{33}$ the identity
	$\sfT_{n,E}(\lm) = \sfT_{n,E}(\lm)\sfP$ is straightforward.
	Hence, employing~\cite[IV.1.5]{GK69} we find
	\begin{equation}\label{eq:determinant2}
	\begin{split}
		\det\lbra(\sfI + \eps \sfR_{n,E}(\eps,\lm)\sfT_{n,E}(\lm)\rbra) 
		& = 
		\det\lbra(\sfI + \eps \sfR_{n,E}(\eps,\lm)\sfT_{n,E}(\lm)\sfP\rbra)\\
		&=
		\det\lbra(\sfI + \eps \sfP\sfR_{n,E}(\eps,\lm)\sfT_{n,E}(\lm)\rbra).
	\end{split}	
	\end{equation}
	For $k,l\in\{1,2,3\}$ we can write the following identities
	\[
	\begin{split}
		\sfP_{kk}\sfP\sfR_{n,E}(\eps,\lm)\sfT_{n,E}(\lm)\sfP_{ll} 
		& =
		\sfP_{kk}\sfR_{n,E}(\eps,\lm)
		\Big(
		b_{n}^E(\lm)\lbra[\sfP_{21} + \sfP_{12}\rbra]+
		b_{n+1}^E(\lm)	\lbra[\sfP_{23} + \sfP_{32}  \rbra] \Big)\sfP_{ll}
		\\
		&=
		\sfP_{kk}\sfR_{n,E}(\eps,\lm)
		\Big(
		b_{n}^E(\lm)\lbra[\sfP_{2l}\delta_{1l} + \sfP_{1l}\delta_{2l}\rbra]
		+
		b_{n+1}^E(\lm)	\lbra[\sfP_{2l}\delta_{3l} + \sfP_{3l}\delta_{2l}  \rbra]
		\Big) \\
		& = 
		\sfP_{kl}b_{n}^E(\lm)\lbra[f_{2k}^E(\eps,\lm)\delta_{1l} + 
		f_{1k}^E(\eps,\lm)\delta_{2l}\rbra]\\
 		&\qquad\qquad\qquad 
 		+ \sfP_{kl} b_{n+1}^E(\lm)\lbra[f_{2k}^E(\eps,\lm)\delta_{3l} + f_{3k}^E(\eps,\lm)\delta_{2l}  \rbra]\\
		&= a_{kl}^E(\eps,\lm) \sfP_{kl}
	\end{split}	
	\]
	with $f_{kl}^E$ as in~\eqref{eq:Fnm_E}, and as a result
	we get
	\[
	\begin{split}
		\sfP\sfR_{n,E}(\eps,\lm)\sfT_{n,E}(\lm) =
		\sum_{k = 1}^3\sum_{l = 1}^3 a_{kl}^E(\eps,\lm) \sfP_{kl},
	\end{split}	
	\]
	with $a_{kl}^E(\eps,\lm)$ as in~\eqref{eq:akl}.
	Hence, the determinant in~\eqref{eq:determinant2} can be expressed as	
	\[
		\det\lbra(\sfI + \eps \sfR_{n,E}(\eps,\lm)\sfT_{n,E}(\lm)\rbra) 
		= 
		\det(\sfI + \eps \sfA_{n,E}(\eps,\lm))
	\]
	where on the right-hand side we have the
	determinant of the $3\times 3$ matrix 
	$\sfI + \eps \sfA_{n,E}(\eps,\lm)$; \cf~\eqref{eq:Amatrix}.
	The claim of lemma follows then from~\eqref{eq:reduction},
	~\eqref{eq:kernel_det}, and  Theorem~\ref{thm:BS_resonance}.
\end{proof}
In the second lemma we establish the existence and investigate properties
of solutions of the scalar equation in Lemma~\ref{lem:aux1}.
To this aim it is natural to try to apply the \emph{analytic implicit function theorem}.
%\cite[Thm. 3.4.2]{S2A}.
The main obstacle that makes a direct application
of the implicit function theorem difficult lies
in the fact that $\lm\mapsto\det(\sfI + \eps\sfA_{n,E}(\eps,\lm))$
is not analytic near $\nu_n$ due to the cut on the real axis. 
We circumvent this obstacle by applying the analytic implicit function
theorem to an auxiliary function which is analytic in the disc and 
has values in different sectors of this disc that are in direct
correspondence with the values of $\lm\mapsto\det(\sfI + \eps\sfA_{n,E}(\eps,\lm))$ 
on the four different sheets in $\wt Z$ which are mutually adjacent
in a proper way.
\begin{assumption}\label{assump}
	Let $n\in\dN_0$ and the sheet $Z_E\subset \wt Z$ be fixed.
	Let the sheets $Z_F$, 
	$Z_G$ and $Z_H$ be such that
	$Z_E \sim_{n-1} Z_F$, $Z_F \sim_n Z_G$ and 
	$Z_G \sim_{n-1} Z_H$.
	For $r > 0$ let the matrix-valued function 
	$\dD_r\times\dD_r^\times \ni 
	(\eps,\kp)\mapsto \sfB_{n,E}(\eps,\kp)$ 
	be defined by
	\[
			\sfB_{n,E}(\eps,\kp) := 
			\begin{cases}
				\sfA_{n,E}(\eps, \nu_n - \kp^4),&
				\quad\arg\kp \in\Phi_E  :=
				(-\pi,-\tfrac{3\pi}{4}]\cup (0,\tfrac{\pi}{4}] ,\\
				\sfA_{n,F}(\eps, \nu_n - \kp^4),& 
				\quad\arg\kp \in \Phi_F := 
				(-\tfrac{3\pi}{4},-\tfrac{\pi}{2}]\cup (\tfrac{\pi}{4},
	\tfrac{\pi}{2}],\\
				\sfA_{n,G}(\eps, \nu_n - \kp^4),&
				 \quad\arg\kp \in \Phi_G :=
				 (-\tfrac{\pi}{2},-\tfrac{\pi}{4}]
				 \cup (\tfrac{\pi}{2},\tfrac{3\pi}{4}],
				\\
				\sfA_{n,H}(\eps, \nu_n - \kp^4),
				&
				\quad\arg\kp \in\Phi_H := 
				(-\tfrac{\pi}{4},0]\cup (\tfrac{3\pi}{4},\pi].
			\end{cases}
	\]
\end{assumption}
Tracing the changes in the characteristic vector
along the path $Z_E \sim_{n-1} Z_F \sim_n Z_G \sim_{n-1} Z_H$ one
easily verifies that $Z_H \sim_n Z_E$.
Thanks to that $\sfB_{n,E}$ is analytic on 
$\dD_r\times\dD_r^\times$ for sufficiently small $r > 0$ which is essentially a consequence
of componentwise analyticity in $\dD_r$ of vector-valued function
\[
	\kp \mapsto R_\bullet(\nu_n-\kp^4),\qquad
	\bullet \in \{E,F,G,H\}\quad\text{for}\quad\arg\kp \in \Phi_\bullet,
\]
where $R_\bullet$ is as in~\eqref{eq:RE}.
\begin{lem}\label{lem:aux2}
	Let $n\in\dN_0$ and the sheet $Z_E\subset \wt Z$ be fixed.
	Set $(\frp,\frq,\frr) := (l_{n-1}^E,l_n^E,l_{n+1}^E)$.
	Let the matrix-valued function $\sfB_{n,E}$ 
	be as in Assumption~\ref{assump}. 
	Then the implicit scalar equation
	\[
		\det\big(\sfI + \eps \sfB_{n,E}(\eps,\kp)\big) = 0
	\]	
	has exactly two solutions $\kp_{n,E,j}(\cdot)$
	analytic near $\eps = 0$ such that $\kp_{n,E,j}(0) = 0$,
	satisfying $\det(\sfI + \eps 
	\sfB_{n,E}(\eps,\kp_{n,E,j}(\eps))) = 0$
	pointwise for sufficiently small $\eps > 0$, and having asymptotic expansions 
	\begin{equation}\label{eq:asymp_kp}
		\kp_{n,E,j}(\eps) 
		=
		\eps\frac{(z_{n,E})^{1/2}_j}{2}
		+ \cO(\eps^2), \qquad \eps\arr 0+,
	\end{equation}
	where 
	$z_{n,E} = (-1)^{\frq + \frr}(n+1) + (-1)^{\frp + \frq+1}n\ii$.
%	These solutions satisfy  $\kp_{n,E,1}^\pm = e^{\ii\pi}\kp_{n,E,0}^\pm$.
\end{lem}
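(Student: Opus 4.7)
My plan is to reduce the implicit equation to a nondegenerate quadratic in a rescaled variable $\zeta=\kp/\eps$, to which the analytic implicit function theorem can be applied. Because the matrix $\sfA_{n,E}(\eps,\lm)$ has rank at most two (thanks to the linear dependence of its first and third columns noted before~\eqref{eq:Amatrix}), the same is true of $\sfB_{n,E}(\eps,\kp)$; hence $\det\sfB_{n,E}\equiv 0$ and
\[
F(\eps,\kp):=\det\bigl(\sfI+\eps\,\sfB_{n,E}(\eps,\kp)\bigr)=1+\eps\tr\sfB_{n,E}(\eps,\kp)+\eps^{2}e_{2}\bigl(\sfB_{n,E}(\eps,\kp)\bigr),
\]
where $e_{2}$ denotes the second elementary symmetric polynomial of the three eigenvalues. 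Definition~\eqref{eq:bnE} combined with the substitution $\lm=\nu_{n}-\kp^{4}$ shows that, on each sector $\Phi_{\bullet}$ with $\bullet\in\{E,F,G,H\}$, the quantities $\kp\,b_{n}^{\bullet}(\nu_{n}-\kp^{4})$ and $\kp\,b_{n+1}^{\bullet}(\nu_{n}-\kp^{4})$ are bounded analytic functions of $\kp$, which by the piecewise definition of $\sfB_{n,E}$ in Assumption~\ref{assump} and the closed loop $Z_{E}\sim_{n-1}Z_{F}\sim_{n}Z_{G}\sim_{n-1}Z_{H}\sim_{n}Z_{E}$ glue analytically across the sector boundaries. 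Consequently $\wt\sfB_{n,E}(\eps,\kp):=\kp\,\sfB_{n,E}(\eps,\kp)$ extends to an analytic function on a full polydisc $\dD_{r}\times\dD_{r}$, and so do $\tau(\eps,\kp):=\tr\wt\sfB_{n,E}(\eps,\kp)=\kp\tr\sfB_{n,E}(\eps,\kp)$ and $\sigma(\eps,\kp):=e_{2}(\wt\sfB_{n,E}(\eps,\kp))=\kp^{2}\,e_{2}(\sfB_{n,E}(\eps,\kp))$.

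Next, multiplying $F$ by $\kp^{2}$ and substituting $\kp=\eps\zeta$ reduces, for $\kp\ne 0$, the equation $F=0$ to the analytic scalar equation
\[
G(\eps,\zeta):=\zeta^{2}+\zeta\,\tau(\eps,\eps\zeta)+\sigma(\eps,\eps\zeta)=0
\]
on $\dD_{r}\times\dD_{r}$. Because $\sfR_{n,E}(0,\lm)=\sfI$ one has $f_{kl}^{E}(0,\lm)=\delta_{kl}$, and inspection of~\eqref{eq:akl} shows that the diagonal entries of $\sfA_{n,E}(0,\lm)$ vanish identically, so $\tau(0,\cdot)\equiv 0$ and in particular $\tau(0,0)=0$. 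To evaluate $\sigma(0,0)$, observe that $\wt\sfB_{n,E}(0,0)$ is the tridiagonal matrix with zero diagonal and off-diagonal entries $\alpha:=\lim_{\kp\to 0}\kp\,b_{n}^{E}(\nu_{n}-\kp^{4})$ and $\beta:=\lim_{\kp\to 0}\kp\,b_{n+1}^{E}(\nu_{n}-\kp^{4})$, the limits being taken along $\arg\kp\in\Phi_{E}$. Using $r_{n}(\nu_{n}-\kp^{4},l_{n}^{E})=(-1)^{\frq}\kp^{2}+\cO(\kp^{6})$, $r_{n-1}(\nu_{n},l_{n-1}^{E})=(-1)^{\frp}\ii$, and $r_{n+1}(\nu_{n},l_{n+1}^{E})=(-1)^{\frr}$, a short computation yields
\[
4\alpha^{2}=(-1)^{\frp+\frq+1}n\,\ii,\qquad 4\beta^{2}=(-1)^{\frq+\frr}(n+1),
\]
so that $\sigma(0,0)=-\alpha^{2}-\beta^{2}=-z_{n,E}/4$ and $G(0,\zeta)=\zeta^{2}-z_{n,E}/4$.

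Since $\Re z_{n,E}=(-1)^{\frq+\frr}(n+1)\ne 0$ for every $n\in\dN_{0}$, the number $z_{n,E}$ is nonzero and the two roots $\zeta_{j}^{\ast}:=(z_{n,E})^{1/2}_{j}/2$, $j=0,1$, of $G(0,\cdot)$ are distinct. They are simple zeros because $\partial_{\zeta}G(0,\zeta_{j}^{\ast})=2\zeta_{j}^{\ast}\ne 0$, so the analytic implicit function theorem produces analytic functions $\eps\mapsto\zeta_{n,E,j}(\eps)$ defined near $\eps=0$ with $\zeta_{n,E,j}(0)=\zeta_{j}^{\ast}$ and $G(\eps,\zeta_{n,E,j}(\eps))\equiv 0$. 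Setting $\kp_{n,E,j}(\eps):=\eps\,\zeta_{n,E,j}(\eps)$ yields $\kp_{n,E,j}(0)=0$ together with the asymptotic expansion~\eqref{eq:asymp_kp}. Since $G$ is quadratic in $\zeta$, any analytic branch of solutions of $F=0$ passing through $\kp(0)=0$ must coincide with one of the $\kp_{n,E,j}$, giving the asserted ``exactly two''; alternatively, the same uniqueness follows by applying the Weierstrass preparation theorem to $\kp^{2}F$ at $(0,0)$.

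The principal technical point is the analyticity of $\wt\sfB_{n,E}$ across the boundaries between the four sectors $\Phi_{\bullet}$: this requires tracking how the branches of $r_{n-1}$, $r_{n}$, and $r_{n+1}$ evolve under each elementary sheet transition $\sim_{n-1}$ or $\sim_{n}$ and verifying that the monodromy accumulated along the loop $Z_{E}\sim_{n-1}Z_{F}\sim_{n}Z_{G}\sim_{n-1}Z_{H}\sim_{n}Z_{E}$ acts trivially on every entry of the rescaled matrix. A secondary, largely bookkeeping, difficulty is the sign calculation which assembles $4\alpha^{2}$ and $4\beta^{2}$ into the precise form with exponents $\frp+\frq+1$ and $\frq+\frr$ and thereby recovers the correct $z_{n,E}$.
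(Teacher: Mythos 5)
Your proposal is correct and follows essentially the same route as the paper's proof: both exploit $\rank\sfB_{n,E}\le 2$ to reduce the determinant condition to a quadratic, compute the same limiting value $-z_{n,E}/4$ from the branches of $r_{n-1},r_n,r_{n+1}$, and conclude via the analytic implicit function theorem. Your rescaling $\zeta=\kp/\eps$ is just the reciprocal of the paper's parameter $t=\eps/\kp$, so the two arguments are equivalent up to this reparametrisation.
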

\begin{proof}
	First, we introduce the shorthand notations 
	\[
		u(\kp) := b_n^\bullet(\nu_n -\kp^4),
		\quad
		v(\kp) := b_{n+1}^\bullet(\nu_n -\kp^4),\qquad
		\bullet \in \{E,F,G,H\}\quad \text{for}\quad
		\arg\kp\in \Phi_\bullet.
	\]
	Let $b_{kl}$ with $k,l\in\{1,2,3\}$ 
	be the entries of the matrix-valued
	function $\sfB_{n,E}$. Furthermore, define the scalar functions 
	$X = X(\eps,\kp)$, $Y = Y(\eps,\kp)$, and $Z = Z(\eps,\kp)$ by
	\begin{equation}\label{eq:XYZ}
	\begin{split}
		X & := b_{11} + b_{22}+ b_{33},\\
		Y & :=
		b_{11} b_{22} + b_{22} b_{33}+ b_{11} b_{33}
        			- b_{13} b_{31} - b_{12}b_{21}- b_{23} b_{32},\\
		Z & := b_{11} b_{22} b_{33}+ 
		b_{13} b_{32} b_{21}
				+ b_{12}b_{23}b_{31}
		 - b_{13} b_{31} b_{22}
				-b_{12} b_{21} b_{33}- b_{11} b_{23}b_{32}.
	\end{split}
	\end{equation}
	Employing an elementary formula 
	for the determinant of $3\times 3$ matrix,
	the equation $\det(\sfI+ \eps\sfB_{n,E}(\eps,\kp)) = 0$
	can be equivalently written as
	\begin{equation}\label{eq:quadratic0}
		1 + \eps X(\eps,\kp) + \eps^2 Y(\eps,\kp) + \eps^3 Z(\eps,\kp) = 0.
	\end{equation}
	By purely algebraic argument one can derive 
	from~\eqref{eq:akl} that $Z  = 0$.
%	for $\bullet \in \{E,F,G,H\}$ it holds
%	%
%	\[
%	\begin{split}
%		Z & = 
%		f_{21}^\bullet u
%		\lbra(f_{12}^\bullet u + f_{32}^\bullet v\rbra)f_{23}^\bullet v
%		+
%		f_{22}^\bullet u\lbra(f_{13}^\bullet u + f_{33}^\bullet v\rbra)
%		f_{21}^\bullet v
%		+	f_{23}^\bullet u\lbra(f_{11}^\bullet u + f_{31}^\bullet v\rbra)
%		f_{22}^\bullet v \\
%		& \qquad\qquad  -
%		f_{23}^\bullet u\lbra(f_{12}^\bullet u + f_{32}^\bullet v\rbra)
%		f_{21}^\bullet v
%		-
%		f_{22}^\bullet u\lbra(f_{11}^\bullet u + f_{31}^\bullet v\rbra)
%		f_{23}^\bullet v
%		-
%		f_{21}^\bullet u\lbra(f_{13}^\bullet u + f_{33}^\bullet v\rbra)
%		f_{22}^\bullet v = 0.
%	\end{split}		
%	\] 
	%
	Hence,~\eqref{eq:quadratic0} simplifies to
	$1 + \eps X(\eps,\kp) + \eps^2 Y(\eps,\kp) = 0$.	
	Introducing new parameter $t := \eps/\kp$ we can further rewrite
	this equation as
	\begin{equation}\label{eq:quadratic}
		1 + t \kp X(\eps,\kp) + t^2 \kp^2 Y(\eps,\kp) = 0.
	\end{equation}	
	Note also that the coefficients
	$(\eps,\kp)\mapsto \kp X(\eps,\kp), \kp^2 Y(\eps,\kp)$ 
	of the quadratic equation~\eqref{eq:quadratic}
	are analytic in $\dD_r^2$.
%	thanks to the corresponding properties of the functions 
%	$(\eps,\kp) \mapsto  \kp a_{kl}(\eps,\kp)$, $k,l \in \{1,2,3\}$.
%	%
	For each fixed pair $(\eps,\kp)$ the equation~\eqref{eq:quadratic} 
	has (in general) two distinct roots $t_j(\eps,\kp)$, $j=0,1$.
	%that are analytic in $\dD_r^2$.
	The condition $\det(\sfI + \eps\sfB_{n,E}(\eps,\kp)) = 0$ 
	with $\kp \ne 0$ holds \iff at least one of the two conditions
	\begin{equation}\label{eq:F_j}
		f_j(\eps,\kp) : = \eps - \kp t_j(\eps,\kp) = 0,\qquad j=0,1,
	\end{equation}
	is satisfied. Using analyticity of $u(\cdot)$ and $v(\cdot)$ near $\kp = 0$, 
	we compute 
	\[
	\begin{split}
		\lim_{\kp\arr 0} 
		\kp	u 
		& =	\lim_{r\arr 0+} re^{\ii\pi/8}u(re^{\ii\pi/8})\\
		& =
		\lim_{r\arr 0+} \frac{n^{1/2}}{2}
		\frac{re^{\ii\pi/8}}{((-1 + \ii r^4)^{1/2}_\frp
		(\ii r^4)^{1/2}_\frq)^{1/2}} 		 =
		\frac{n^{1/2}e^{\ii\pi/8}}{2((-1)^{\frp+\frq} 
		\ii	e^{\ii\pi/4})^{1/2}}, \\
		\lim_{\kp\arr 0} 
		\kp	v 
		& = \lim_{r\arr 0+} re^{\ii\pi/8}v(re^{\ii\pi/8})\\
		& =
		\lim_{r\arr 0+} 
		\frac{(n+1)^{1/2}}{2}
		\frac{re^{\ii\pi/8}}{((\ii r^4)^{1/2}_\frq(1+\ii r^4)^{1/2}_\frr)^{1/2}} 
		= 
		\frac{(n+1)^{1/2}e^{\ii\pi/8}}{2((-1)^{\frq+\frr} e^{\ii\pi/4})^{1/2}}.
	\end{split}	
	\]
	Hence, we get
	\begin{equation*}\label{eq:limits}
	\begin{split}
%		\lim_{(\eps,\kp)\arr \bf0} 
%		\kp	b_{kl}
%		& = 
		\lim_{\eps,r\arr 0+} 
		re^{\ii\pi/8}
		b_{kl}(\eps,re^{\ii\pi/8}) 
		& =
		\lim_{r\arr 0+}re^{\ii\pi/8} u(re^{\ii\pi/8})
		\big(f_{2k}^E(\bfo)\delta_{3l} + f_{3k}^E(\bfo)\delta_{2l}\big)\\
		&\qquad\qquad\qquad +
		\lim_{r\arr 0+}re^{\ii\pi/8} v(re^{\ii\pi/8})
		\big(f_{1k}^E(\bfo)\delta_{2l} + f_{2k}^E(\bfo)\delta_{1l}\big)\\ 		
		&=
		\frac{n^{1/2}e^{\ii\pi/8}\big(\delta_{2k}\delta_{3l} +
		\delta_{3k}\delta_{2l}\big) }{2((-1)^{\frp+\frq} 
		\ii	e^{\ii\pi/4})^{1/2}}
		+
		\frac{(n+1)^{1/2}e^{\ii\pi/8}
		\big(\delta_{1k}\delta_{2l} + \delta_{2k}\delta_{1l}\big)}
		{2((-1)^{\frq+\frr} e^{\ii\pi/4})^{1/2}}.
		\end{split}	
	\end{equation*}
	Combining this with~\eqref{eq:XYZ} we end up with
	\begin{equation*}\label{eq:limits1.5}
	\begin{split}
		\lim_{(\eps,\kp)\arr \bfo} \kp X &  = 
		\lim_{\eps,r\arr 0+} re^{\ii\pi/8}X(\eps,re^{\ii\pi/8})
		=
		\lim_{\eps,r\arr 0+} re^{\ii\pi/8}
		\big[b_{11} +  b_{22} + b_{33}\big](\eps,re^{\ii\pi/8}) = 0,\\
		\lim_{(\eps,\kp)\arr \bfo} \kp^2 Y
		 & = \lim_{\eps,r\arr 0+} r^2e^{\ii\pi/4}Y(\eps,re^{\ii\pi/8})\\
		& = \lim_{\eps,r\arr 0+} r^2e^{\ii\pi/4}
		\big[b_{11} b_{22} + b_{22} b_{33}+ b_{11} b_{33}
        			- b_{13} b_{31} - b_{12}b_{21}- b_{23} b_{32}\big](\eps,
        			re^{\ii\pi/8})\\
		 & = \lim_{\eps,r\arr 0+} r^2e^{\ii\pi/4}
		\big[-b_{12}b_{21} - b_{23} b_{32}\big](\eps,re^{\ii\pi/8})\\
		& =
		 -
		\lbra(
		\frac{n^{1/2}e^{\ii\pi/8}}{2((-1)^{\frp+\frq} 
		\ii	e^{\ii\pi/4})^{1/2}}
		\rbra)^2
		-
		\lbra(
		\frac{(n+1)^{1/2}e^{\ii\pi/8}}{2((-1)^{\frq+\frr} e^{\ii\pi/4})^{1/2}}
		\rbra)^2   \\
		& =  
		 -\frac{(-1)^{\frp + \frq + 1}n\ii}{4} 
		 -
		 \frac{(-1)^{\frq + \frr}(n+1)}{4} 
		 =  
		 -\frac{z_{n,E}}{4}.
  	\end{split}	  
	\end{equation*}
	Hence, the roots $t_j(\eps,\kp)$ of~\eqref{eq:quadratic}
	converge in the limit $(\eps,\kp)\arr \bfo$ to the roots
	$2\big[(z_{n,E})^{1/2}_j\big]^{-1}$, $j=0,1$, of 
	the quadratic equation $z_{n,E}t^2 - 4 = 0$.
	Moreover, analyticity of the coefficients in equation~\eqref{eq:quadratic},
	the above limits, and the formula for the roots of a quadratic equation
	imply analyticity of the functions $(\eps,\kp)\mapsto t_j(\eps,\kp)$
	near $\bfo$.
	\vspace{0.7mm}

	\noindent\underline{\emph{Step 2.}}
	The partial derivatives of $f_j$ in~\eqref{eq:F_j}
	with respect to $\eps$ and $\kp$
	are given by
	$\p_\eps f_j  =  1 - \kp \p_\eps t_j$
	and $\p_\kp f_j  =  - t_j - \kp\p_{\kp} t_j$.
	Analyticity of $t_j$ near $\bfo$
	implies $(\p_\eps f_j)(\bfo)  = 1$ and 
	$(\p_\kp f_j)(\bfo) = - t_j$.
	In particular, we have shown that $(\p_\kp f_j)(\bfo) \ne 0$.
	Since the functions $f_j(\cdot)$ are analytic near $\bfo$
	and satisfy $f_j(\bfo) = 0$, we can apply the
	analytic implicit function theorem~\cite[Thm. 3.4.2]{S2A} 
	which yields existence of a unique function $\kp_j(\cdot)$, 
	analytic near $\eps = 0$ such that $\kp_j(0) = 0$
	and that $f_j(\eps, \kp_j(\eps))  = 0$ holds pointwise.
	Moreover, the derivative of $\kp_j$ at $\eps = 0$ can be expressed as
	\begin{equation}\label{eq:kp_der}
		\kp^\pp_j(0)
		= - 
		\frac{(\p_\eps f_j)(\bfo)}{(\p_\kp f_j)(\bfo)} 
		= 
		\frac{1}{t_j(\bfo)}.
	\end{equation}
	Hence, we obtain Taylor expansion for $\kp_j$ near $\eps = 0$
	\begin{equation*}\label{eq:Taylor}
%	\begin{split}
		\kp_j(\eps)  = 
		\kp_j(0) 
		+ 
		\kp^\pp_j(0)\eps + \cO(\eps^2) 
		= 
		\frac{\eps}{t_j(\bfo)} + \cO(\eps^2)
		= 
		\eps\frac{(z_{n,E})^{1/2}_j}{2} + \cO(\eps^2)\qquad \eps\arr 0+.
%	\end{split}	
	\end{equation*}
	The functions $\kp_j$, $j=0,1$, satisfy all the requirements
	in the claim of the lemma.
%	Finally, we note that only one of the 
%	two solutions $\kp_j(\eps)$, $j=1,2$ satisfies the 
%	condition $\Im\kp_j(\eps) \ge 0$
%	for all sufficiently small $\eps > 0$ and thus it satisfies 
%	all the requirements in the lemma on the function $\kp_{n,E}^+$.
\end{proof}
Now we are prepared to prove Theorem~\ref{thm:discr}\,(ii) and 
Theorem~\ref{thm:res}\,(ii)-(iii) from the introduction.
\begin{proof}[Proof of Theorem~\ref{thm:discr}\,(ii)]
	By Proposition~\ref{prop:known}\,(iv) we have
	$\cN_{1/2}(\Op_\eps) = 1$ for all sufficiently small
	$\eps > 0$. Recall that we denote by $\lm_1(\Op_\eps)$ 
	the corresponding unique eigenvalue. Thus, we have by Lemma~\ref{lem:aux1}
	\[
		\det(\sfI+ \eps\sfA_{0,\varnothing}(\eps,\lm_1(\Op_\eps))) 
		= 0.
	\]
	Using the construction of Assumption~\ref{assump}
	for the physical sheet and $n = 0$, we obtain
	\[
		\det(\sfI + \eps\sfB_{0,\varnothing}(
		\eps, (\nu_0 - \lm_1(\Op_\eps))^{1/4})) 
		=
		\det(\sfI+ \eps\sfA_{0,\varnothing}(\eps,\lm_1(\Op_\eps))) = 0,
	\]
	where we have chosen the principal branch for $(\cdot)^{1/4}$.
	Thus, by Lemma~\ref{lem:aux2} we get
	\[
		(\nu_0 - \lm_1(\Op_\eps))^{1/4} = \frac{\eps}{2}
		+ \cO(\eps^2),\qquad \eps \arr 0+,
	\]
	where we have used the fact that $z_{0,\varnothing} = 1$.
	Hence, taking the fourth power of the left
	and right hand sides in the above equation we arrive at
	\[
		\lm_1(\Op_\eps) = \nu_0 - \frac{\eps^4}{16}
		+ \cO(\eps^5),\qquad \eps \arr 0+.\qedhere
	\]
\end{proof}

\begin{proof}[Proof of Theorem~\ref{thm:res}\,(ii)-(iii)]
	Let $n\in\dN$ and the sheet $Z_E \subset \wt Z$ be fixed.
	Let us repeat the construction of Assumption~\ref{assump}. 
	By Lemma~\ref{lem:aux2} we infer that
	there exist exactly
	two analytic solutions $\kp_{n,E,j}$, $j=0,1$ of the 
	implicit scalar equation 
	$\det(\sfI+ \eps\sfB_{n,E}(\eps,\kp)) = 0$ 
	such that $\kp_{n,E,j}(0) = 0$. 
%	Because of identity
%	$\sfB_{n,E}(\eps,\kp) = \sfB_{n,E}(\eps,-\kp)$
	It can be checked that
	both solutions correspond to the same resonance
	and it suffices to analyse the solution 
	$\kp_{n,E} := \kp_{n,E,0}$ only.

	For all small enough $\eps > 0$ 
	the asymptotics~\eqref{eq:asymp_kp} yields
	\[
		\arg(\kp_{n,E}(\eps)) = 
		\frac{1}{2}\arg(z_{n,E}) \in \Phi_E,
		\quad \text{\iff} \quad n\in\cS(E).
	\]
	Hence, if $n\in\dN \setminus \cS(E)$,
	Lemmata~\ref{lem:aux1} and~\ref{lem:aux2}
	%and the whole construction
	imply that there will be no resonances in a neighbourhood of
	the point $\lm = \nu_n$ lying on $Z_E^-$ for 
	sufficiently small $\eps > 0$.
	Thus, we have proven Theorem~\ref{thm:res}\,(iii).
	While if $n\in \cS(E)$ we get by Lemmata~\ref{lem:aux1}
	and~\ref{lem:aux2}
	that there will be exactly one resonance 
	\[
		\lm_n^E(\Op_\eps) =  \nu_n - (\kp_{n,E}(\eps))^4,
	\]
	in a neighbourhood of the point $\lm = \nu_n$ 
	lying on $Z_E^-$ for sufficiently small $\eps > 0$ and its asymptotic
	expansion is a direct consequence of the asymptotic
	expansion~\eqref{eq:asymp_kp} given in Lemma~\ref{lem:aux2}.	
	Thus, the claim of Theorem~\ref{thm:res}\,(ii) follows.
\end{proof}

%\input fig_strong.tex

%\subsection{Eigenfunctions}\label{ssec:eigenfunctions}
%

%\subsection{Evolution equation}

%In this subsection we provide plots of eigenstates of $\Op_\eps$ and make conjectures concerning their nodal lines structure.

\begin{appendix}
%-------------------------------------------------------------------------------------
\section{Krein's formula, meromorphic continuation of resolvent, and
condition on resonances}\label{app:A}
%-------------------------------------------------------------------------------------
In this appendix we use Krein's resolvent formula for Smilansky Hamiltonian
to prove Proposition~\ref{prop:cont} and Theorem~\ref{thm:BS_resonance}
on meromorphic continuation of $(\Op_\eps - \lm)^{-1}$ to $\wt Z$.
The proposed continuation procedure is of an iterative nature wherein we, first, extend
$(\Op_\eps - \lm)^{-1}$ to the sheets adjacent to the physical sheet, 
then to the sheets which are adjacent to the sheets being adjacent to the physical sheet and so on. 

To this aim we define for $n\in\N_0$ the scalar functions
$\dC \setminus [\nu_0,+\infty) \mapsto 	y_n(\lm)$ and
$(\dC \setminus [\nu_0,+\infty))\times \R \mapsto \eta_n(\lm;x)$ by
\begin{equation}\label{eq:functions}
	y_n(\lm)  := r_n(\lm)\sqrt{\nu_n},\qquad
	\eta_n(\lm;x)  := \nu_n^{1/4}\exp(-r_n(\lm)|x|),
\end{equation}	
where $r_n(\cdot)$, $n\in\N_0$, is as in~\eqref{eq:r_n}.
Next, we introduce the following operator-valued function 
\[
	\sfT(\lm)\colon \ell^2(\dN_0)\arr \cH, 
	\qquad 
	\sfT(\lm)\{c_n\} := \{c_n \eta_n(\lm;x)\}.
\]
For each fixed $\lm \in \C\setminus [\nu_0,+\infty)$
the operator $\sfT(\lm)$ is bounded and everywhere defined
and the adjoint of $\sfT(\ov\lm)$ acts as 
\[
	\sfT(\ov\lm)^*\{u_n\} \sim \{I_n(\lm;u_n)\}_{n\in\dN_0},
	\qquad I_n(\lm;u) := \int_\R \eta_n(\lm;x)u(x)\dd x.
\]
With these preparations, the resolvent difference of 
$\Op_\eps$ and $\Op_0$ can be expressed by~\cite[Thm. 6.1]{S06} 
(see also~\cite[Sec. 6]{NS06}) as follows
\begin{equation}\label{eq:Krein}
	(\Op_\eps - \lm)^{-1} 
	= 
	(\Op_0 - \lm)^{-1} 
	+ 
	\sfT(\lm)\sfY(\lm)
	\big[\big(\sfI + \eps\sfJ_\varnothing(\lm)\big)^{-1} - \sfI\big] 
	\sfY(\lm)\sfT(\ov\lm)^*,\quad \lm \in \C\setminus [\nu_0,+\infty),
\end{equation}
where $\Op_0$ is 
the Smilansky Hamiltonian with $\eps = 0$, 
$\sfY(\lm) := \diag\{(2y_n(\lm))^{-1/2}\}$,
and $\sfJ_\varnothing(\lm)$ is as in~\eqref{eq:Jacobi2}.
The formula~\eqref{eq:Krein} can be viewed as a particular case of abstract Krein's 
formula (see \eg \cite{BL12, BGP08, DM91}) for the resolvent difference of 
two self-adjoint extensions of their common densely defined symmetric restriction.
\begin{proof}[Proof of Proposition~\ref{prop:cont} and Theorem~\ref{thm:BS_resonance}]
	Let us fix $n\in\dN_0$ and a sheet $Z_E\subset\wt Z$. 
	We denote by $\sfR_n(\lm)$ the resolvent of the
	self-adjoint operator $H^2(\dR)\ni f\mapsto -f'' + \nu_n f$ in 
	the Hilbert space $L^2(\dR)$.
	We can express the function $\frr_{n,\eps}^\varnothing(\cdot;u)$ 
	in~\eqref{eq:frr} using Krein's formula~\eqref{eq:Krein} as
	\[
	\begin{split}
		\frr_{n,\eps}^\varnothing(\lm;u)
 		& = 
		\big\langle (\Op_\eps - \lm)^{-1} u\otimes \sfe_n, u\otimes\sfe_n\big\rangle\\
		& = 
		\big\langle (\Op_0 - \lm)^{-1} u\otimes \sfe_n, u\otimes \sfe_n\big\rangle\\ 
		& \qquad \qquad \qquad + 
		\lbra(
			\sfY(\lm)\Big[\lbra(\sfI + \eps\sfJ_\varnothing(\lm)\rbra)^{-1} - 
			\sfI\Big]\sfY(\lm)  
			\sfT(\ov \lm)^*u\otimes \sfe_n, \sfT(\lm)^* u\otimes \sfe_n
		\rbra)\\
		& = 
		(\sfR_n(\lm)u,u)_\dR
		+ 
		I_n(\lm;u)\ov{I_n(\ov\lm;u)}
		\lbra(
			\big[
				\lbra(\sfI + \eps\sfJ_\varnothing(\lm)\rbra)^{-1} - \sfI
			\big]\sfY(\lm)  
			 \sfe_n,\sfY(\lm)^*\sfe_n
		\rbra)		\\
		& =
		(\sfR_n(\lm)u,u)_\dR
		+
		\frac{I_n(\lm;u)I_n(\lm;\ov{u})}{2y_n(\lm)}
		\bigg[
			\lbra(\big(\sfI + \eps\sfJ_\varnothing(\lm)\big)^{-1}\sfe_n,\sfe_n\rbra) 
			- 1
		\bigg].
	\end{split}	
	\]
	%
	%where $\sfh_n f := -f'' + \nu_n f$ with $\dom\sfh_n : = H^2(\dR)$
	%is self-adjoint in $L^2(\dR)$.
	Since  $(\sfR_n(\lm)u,u)_\dR$, $y_n(\lm)$, 
	$I_n(\lm;u)$, and $I_n(\lm;\ov{u})$ can be easily analytically
	continued to $\wt Z$, to extend $\frr_{n,\eps}^\varnothing(\cdot; u )$ 
	meromorphically to the other sheets of the component $\wt Z$ it suffices to
	extend
	\[
		\frs_{n,\eps}^\varnothing(\lm) := 
		\lbra(\lbra(\sfI + \eps\sfJ_\varnothing(\lm)\rbra)^{-1}\sfe_n,\sfe_n\rbra),
	\]
	meromorphically from $Z_\varnothing$ to $\wt Z$. 
	The poles of the meromorphic extension of $\frs_{n,\eps}^\varnothing(\cdot)$
	can be identified with the resonances of $\Op_\eps$ in the sense 
	of Definition~\ref{def:resonances}.

	To this aim we set by definition
	\[
		\frs_{n,\eps}^E(\lm) 
		:= 
		\lbra(\lbra(\sfI + \eps\sfJ_E(\lm)\rbra)^{-1}\sfe_n,\sfe_n\rbra),
	\]
	for any $\lm \in \dC\setminus [\nu_0,+\infty)$ 
	such that $-1\notin \s(\eps\sfJ_E(\lm))$.
	In what follows let $Z_E$ and $Z_F$ be two sheets of $\wt Z$
	such that $Z_E\sim_{n-1} Z_F$ with $n\in\dN_0$\footnote{Recall that 
	for any	sheet $Z_E$ holds $Z_E\sim_{-1} Z_E$.}.
	Suppose that $\lm\mapsto \frs_{n,\eps}^E(\cdot)$ 
	is well defined and meromorphic either on $Z_E^+$ or on $Z_E^-$.
	Next, we extend $\lm\mapsto \frs_{n,\eps}^E(\cdot)$ meromorphically
	from $Z_E^\pm$ to $Z_F^\mp$. Without loss of generality we restrict
	our attention to the case that $\lm\mapsto \frs_{n,\eps}^E(\cdot)$ 
	is meromorphic on $Z_E^+$ and extend it meromorphically to $Z_F^-$.
	On the open set $\Omega_n := \dC_+ \cup \dC_- \cup (\nu_{n-1},\nu_n)$ 
	the operator-valued function 
	\[
		\sfJ_{EF}(\lm) :=	
		\begin{cases}
			\sfJ_E(\lm), &\quad \lm \in \dC_+,\\
			\sfJ_F(\lm), &\quad \lm \in \Omega_n\setminus\dC_+,
		\end{cases}
	\]
	is analytic which is essentially a consequence of analyticity 
	on $\Omega_n$ of the entries $b_m^\bullet(\lm)$
	(with $\bullet = E$ for $\lm\in\dC_+$ and $\bullet = F$ for 
	$\lm \in\dC_-$) for the underlying Jacobi matrix.
	Thus, the operator-valued function
	\[
		\Omega_n\ni \lm\mapsto \sfA^{EF}_\eps(\lm) :=
			\eps\lbra(\sfI + \eps\sfJ_0\rbra)^{-1}\lbra(\sfJ_{EF}(\lm) - \sfJ_0\rbra)
	\]
	is also analytic on $\Omega_n$ because of the analyticity of $\sfJ_{EF}(\lm)$. 
	Furthermore, the values of $\sfA^{EF}_\eps(\cdot)$ are compact operators thanks 
	to compactness of the difference $\sfJ_{EF}(\lm) - \sfJ_0$.
	Taking into account that
	\[
		\lbra(\lbra(\sfI + \sfA^{EF}_\eps(\lm)\rbra)^{-1}\sfe_n,
		\lbra(\sfI +\eps\sfJ_0\rbra)^{-1}\sfe_n\rbra)
		= 
		\begin{cases}
				\frs_{n,\eps}^E(\lm), \qquad \lm \in\dC_+,\\
				\frs_{n,\eps}^F(\lm), \qquad \lm \in\Omega_n\setminus\dC_+,
		\end{cases}
	\]
	we obtain from the analytic Fredholm theorem~\cite[Thm. VI.14]{RS-I} that
	$\dC_-\ni \lm \mapsto \frs_{n,\eps}^F(\lm)$ is a meromorphic continuation
	of $\dC_+\ni \lm \mapsto \frs_{n,\eps}^E(\lm)$ across the interval
	$(\nu_{n-1},\nu_n)$ and that the poles of 
	$\dC_- \ni \lm \mapsto \frs_{n,\eps}^F(\lm)$ 
	satisfy the condition
	\[
		\ker\lbra(\sfI + \eps\sfJ_F(\lm)\rbra) \ne \{0\},\qquad \lm\in\dC_-.
	\]
	Starting from the physical sheet $Z_\varnothing$ we use the above
	procedure iteratively to extend $\frs_{n,\eps}^\varnothing(\cdot)$
	meromorphically to the whole of $\wt Z$ thus proving
	Proposition~\ref{prop:cont} and Theorem~\ref{thm:BS_resonance}.
\end{proof}	
%
%\section{Equivalence of two definitions of resonances}
%%
%ToDo!

\section*{Acknowledgements}
This research was supported by the Czech Science Foundation
(GA\v{C}R) within the project 14-06818S.

\end{appendix}

\newcommand{\etalchar}[1]{$^{#1}$}

%\bibliography{C:/Data/00Synchronized/references}

%
\end{document}